\newif\ifFull
\DeclareMathOperator*{\argmax}{argmax}
\newtheorem*{theorem*}{Theorem}
\newcommand{\Is}       {\coloneqq}
\newcommand{\set}[1]{\left\{ #1\right\}}
\newcommand{\gilt}{:}
\newcommand{\sodass}{\,:\,}
\newcommand{\setGilt}[2]{\left\{ #1\sodass #2\right\}}
\def\MdR{\ensuremath{\mathbb{R}}}
\title{Buffered Streaming Edge Partitioning} 
\titlerunning{Buffered Streaming Edge Partitioning} %
\author{Adil {Chhabra}}{Heidelberg University, Germany }{adil.chhabra@stud.uni-heidelberg.de}{https://orcid.org/0009-0009-5726-9389}{}
\author{Marcelo {Fonseca Faraj}}{Heidelberg University, Germany}{marcelofaraj@informatik.uni-heidelberg.de}{https://orcid.org/0000-0001-7100-236X}{}
\author{Christian {Schulz}}{Heidelberg University, Germany}{christian.schulz@informatik.uni-heidelberg.de}{https://orcid.org/0000-0002-2823-3506}{}
\author{Daniel {Seemaier}}{Karlsruhe Institute of Technology, Germany}{daniel.seemaier@kit.edu}{https://orcid.org/0000-0002-1997-1304}{}
\authorrunning{A. Chhabra et al.}
\keywords{graph partitioning, edge partitioning, streaming, online, buffered partitioning} 
\begin{document}

\maketitle

\begin{abstract}
Addressing the challenges of processing massive graphs, which are prevalent in diverse fields such as social, biological, and technical networks, we introduce \texttt{HeiStreamE} and \texttt{FreightE}, two innovative (buffered) streaming algorithms designed for efficient edge partitioning of large-scale graphs. \texttt{HeiStreamE} utilizes an adapted Split-and-Connect graph model and a Fennel-based multilevel partitioning scheme, while \texttt{FreightE} partitions a hypergraph representation of the input graph. Besides ensuring superior solution quality, these approaches also overcome the limitations of existing algorithms by maintaining linear dependency on the graph size in both time and memory complexity  with no dependence on the number of blocks of partition. Our comprehensive experimental analysis demonstrates that \texttt{HeiStreamE} outperforms current streaming algorithms and the re-streaming algorithm \texttt{2PS} in partitioning quality (replication factor), and is more memory-efficient for real-world networks where the number of edges is far greater than the number of vertices. Further, \texttt{FreightE} is shown to produce fast and efficient partitions, particularly for higher numbers of partition blocks. 
\end{abstract}

\vfill \pagebreak

\setcounter{page}{1}%

\section{Introduction}
\label{sec:introduction}

Complex, large graphs, often composed of billions of entities, are employed across multiple fields to model social, biological, navigational, and technical networks. However, processing huge graphs requires extensive computational resources, necessitating the parallel computation of graphs on distributed systems. Large graphs are partitioned into sub-graphs distributed among $k$ processing elements (PEs). PEs perform computations on a portion of the graph, and communicate with each other through message-passing. \emph{Graph partitioning} models the distribution of graphs across PEs such that each PE receives approximately the same number of components (vertices) and communication between PEs (via edges between them) is minimized. Edge partitioning, which outperforms traditional vertex partitioning on real-world power-law graphs~\cite{gonzalez2012powergraph,jure2008,low2012}, partitions edges into $k$ blocks such that vertex replication is minimized, hence minimizing the communication needed to synchronize vertex copies. \emph{Graph vertex and edge partitioning} are NP-hard~\cite{bourse-2014,Garey1974} and there can be no approximation algorithm with a constant ratio factor for general graphs unless P = NP~\cite{BuiJ92}. Thus, heuristic algorithms are used in practice. Further, due to data proliferation, streaming algorithms are increasingly being used to partition huge graphs quickly with low computational resources~\cite{alistarh2015streaming,awadelkarim2020prioritized,freight_paper,HeiStream,StreamMultiSection,hoang2019cusp,jafari2021fast,mayer2018adwise,tacsyaran2021streaming,tsourakakis2014fennel}.

Streaming edge partitioning entails the sequential loading of edges for immediate
assignment to blocks. One-pass streaming edge partitioners permanently assign edges to blocks during a single sequential pass over the graph’s data stream~\cite{petroni2015hdrf,xie2014distributed}. Alternatively, buffered streaming algorithms receive and store a buffer of vertices along with their edges before making assignment decisions, thus providing information about future vertices~\cite{HeiStream,mayer2018adwise}, and re-streaming algorithms gather information about the global graph structure~\cite{2PS2020,psl}. With a few exceptions~\cite{psl,xie2014distributed} most streaming edge partitioners have a high time complexity due to a linear dependency on the number of blocks $k$. However, in recent years, high $k$ values are frequently used in graph partitioning due to the increasing size of graphs, complexity of computations, and availability of processors. An existing re-streaming edge partitioner, \texttt{2PS-L}~\cite{psl}, achieves a linear runtime independent of $k$, but produces lower solution quality than state-of-the-art partitioners and has a linear memory dependence on $k$. Thus, there remains potential to explore high-quality streaming edge partitioners without a runtime and memory dependency on $k$.

\textbf{Contribution.} We propose \texttt{HeiStreamE}, a buffered streaming algorithm for edge partitioning that leverages the performance efficacy of multilevel algorithms. By employing an adapted version of the SPlit-And-Connect (SPAC) model~\cite{SPAC2017} and solving it with a Fennel-based multilevel scheme~\cite{HeiStream}, our algorithm produces superior solution quality while maintaining time and memory complexities that are linearly dependent on the size of the graph and independent of $k$. Our results establish the superiority of \texttt{HeiStreamE} over all current streaming algorithms, and even the re-streaming algorithm \texttt{2PS}~\cite{2PS2020,psl}, in replication factor. These outcomes highlight the considerable potential of our algorithm, positioning it as a promising tool for edge partitioning.
We additionally provide an implementation of an efficient streaming edge partitioner, \texttt{FreightE}, which uses streaming hypergraph partitioning~\cite{freight_paper} to partition edges on the fly. Our experiments demonstrate that \texttt{FreightE} is significantly faster \hbox{than all competing algorithms, especially for high $k$ values.}

\vfill \pagebreak

\section{Preliminaries}
\label{sec:preliminaries}

\subsection{Basic Concepts}
\label{subsec:basic_concepts}

\subparagraph*{(Hyper)Graphs.}
Let $G=(V=\{0,\ldots, n-1\},E)$ be an \emph{undirected graph} with no multiple or self-edges, such that $n = |V|$, $m = |E|$.
Let $c: V \to \MdR_{\geq 0}$ be a vertex-weight function, and let $\omega: E \to \MdR_{>0}$ be an edge-weight function.
We generalize $c$ and $\omega$ functions to sets, such that $c(V') = \sum_{v\in V'}c(v)$ and $\omega(E') = \sum_{e\in E'}\omega(e)$.
An edge $e = (u, v)$ is said to be \emph{incident} on vertices $u$ and $v$. Let $N(v) = \setGilt{u}{(v,u) \in E}$ denote the neighbors of $v$. 
A graph $S=(V', E')$ is said to be a \emph{subgraph} of $G=(V, E)$ if $V' \subseteq V$ and $E' \subseteq E \cap (V' \times V')$. 
If $E' = E \cap (V' \times V')$, $S$ is an \emph{induced} subgraph.
Let $d(v)$ be the degree of vertex $v$ and $\Delta$ be the maximum degree of $G$. %
Let $H = (V, E)$ be an \emph{undirected hypergraph} with $n = |V|$ vertices, $m = |E|$ hyperedges or nets. A net, unlike an edge of a graph, may consist of more than two vertices, and is defined as a subset of $V$.

\subparagraph*{Partitioning.}
Given a number of \emph{blocks} $k \in \mathbb{N}_{\geq 1}$, and an undirected (hyper)graph with \emph{positive} edge weights, the \emph{(hyper)graph partitioning} problem pertains to the partitioning of a (hyper)graph into $k$ smaller (hyper)graphs by assigning the vertices (\emph{vertex partitioning}) or (hyper)edges (\emph{edge partitioning}) of the graph to $k$ mutually exclusive blocks, such that the blocks have roughly the same size and the particular objective function is minimized or maximized. More precisely, a \emph{\mbox{$k$-vertex partition}} of a (hyper)graph partitions $V$ into $k$ blocks $V_1, \dots, V_k$ such that $V_1 \cup \cdots \cup V_k = V$ and $V_i \cap V_j = \emptyset$ for $i \neq j$. 
The \emph{edge-cut} (resp. \emph{cut net}) of a $k$-partition consists of the total weight of the \emph{cut edges} (resp. \emph{cut nets}), i.e., (hyper)edges crossing blocks.
More formally, let the edge-cut (resp. \emph{cut net}) be $\sum_{i<j}\omega(E')$, in which $E' \coloneqq $ $\big\{e\in E, \exists (u,v) \subseteq e : u\in V_i,v\in V_j, i \neq j\big\}$ is the~\emph{cut-set}, i.e.,~the set of all \emph{cut edges} (resp. \emph{cut nets}).
The \emph{balancing constraint} demands that the sum of vertex weights in each block does not exceed a threshold associated with some allowed \emph{imbalance}~$\varepsilon$.
More specifically, \hbox{$\forall i \in \{1,\dots,k\} \gilt$ $c(V_i)\leq L_{\max}\Is \big\lceil(1+\varepsilon) \frac{c(V)}{k} \big\rceil$}. 
For each net $e$ of a hypergraph, $\Lambda(e) \coloneqq \{V_i \mid V_i \cap e \neq \emptyset\}$ denotes the \emph{connectivity set} of $e$.
Further, the \emph{connectivity} $\lambda(e)$ of a net~$e$ is the cardinality of its connectivity set, i.e., $\lambda(e) \coloneqq |\Lambda(e)|$.
The so-called \emph{connectivity} metric ($\lambda-1$) is computed as $\sum_{e\in E'} (\lambda(e) -1)~\omega(e)$, where $E'$ is the cut-set.

Similarly, a \emph{\mbox{$k$-edge partition}} of a graph partitions the edge set $E$ into $k$ blocks \hbox{$E_1$,\ldots,$E_k$} \hbox{such that $E_1\cup\cdots\cup E_k=E$} and $E_i\cap E_j=\emptyset$ for $i\neq j$. In edge partitioning, a common objective function is the minimization of the \emph{replication factor}, which is defined as the number of replicated vertices divided by the total number of vertices in the graph. 
Formally, we define the set $V(E_i) = \big\{ v \in V \mid \exists u \in V: (u, v) \in E_i \big\}$ for each partition $E_i$ as the number of vertices in $V$ that have at least one edge incident on them that was assigned to block $E_i$. Taking the sum of $|V(E_i)|$ over all $k$ gives us the total number of vertex replicas generated by the partition. \emph{Replication factor} is then defined as
$RF(E_1, E_2, \dots, E_k) = \frac{1}{n}\sum_{i = 1,\dots,k}|V(E_i)|$.
Intuitively, a minimized \emph{replication factor} suggests that vertices are replicated in minimum blocks. Minimum vertex replication, in turn, results in lower synchronization overhead in distributed graph processing due to reduced exchange of vertex state across blocks. 

\begin{figure}[t]
	\centering
	\includegraphics[width=.5\textwidth]{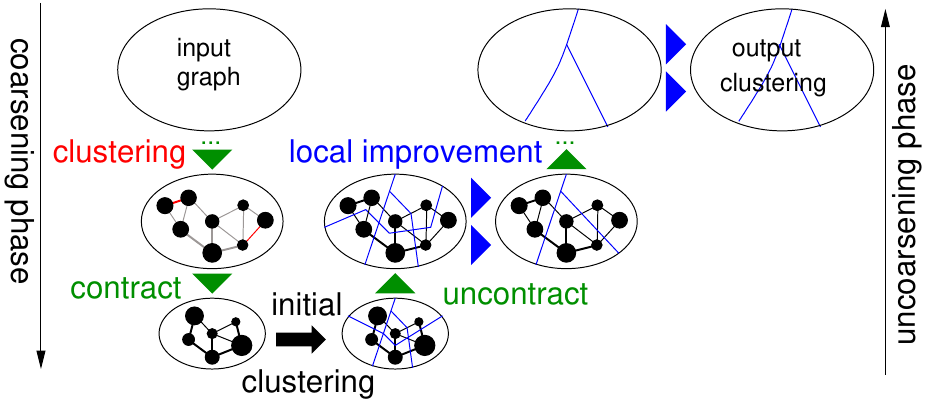}
	\caption{Multilevel scheme.}
	\label{fig:MSGC}
\end{figure}

\vspace*{-0.25cm}
\subparagraph*{Multilevel Scheme.}
\label{subsec:Multilevel Scheme}
A successful heuristic for \emph{vertex partitioning} is the \emph{multilevel}~\cite{more_recent_advances_hgp} approach, illustrated in Figure~\ref{fig:MSGC}.
It recursively computes a clustering and \emph{contracts} it to \emph{coarsen} the graph into smaller graphs that maintain the same basic structure as the input graph.
An \emph{initial partitioning} algorithm is applied to the smallest (\emph{coarsest}) graph and then the contraction is undone.
At each level, a \emph{local search} method is used to improve the partitioning induced by the coarser level. 
\emph{Contracting} a cluster of vertices $C=\set{u_1, \ldots, u_{\ell}}$ involves replacing them with a new vertex~$v$ whose weight is the sum of the weights of the clustered vertices and is connected to all elements  $w \in \bigcup_{i=1}^{\ell} N(u_i)$, \hbox{$\omega(\set{v,w})=\sum_{i=1}^{\ell}\omega(\set{u_i,w})$}.
This ensures that the transfer of partitions from a coarser to a finer level maintains the edge-cut.
The \emph{uncontraction} of a vertex undoes the contraction. 
Local search moves vertices between blocks to reduce the objective.  

\subparagraph*{SPAC Transformation.} The SPAC transformation~\cite{SPAC2017} provides a means to employ a vertex partitioning tool on a transformed graph $G'$, which is derived from the original graph $G$, and subsequently apply the derived vertex partition to establish an edge partition for $G$.
The transformation assumes an undirected, unweighted graph $G=(V,E)$ as input. The SPAC graph $G' = (V',E')$ is then constructed in two phases: 
In the \emph{split} phase, each vertex $v \in V$ generates $d(v)$ split vertices \hbox{$S_v \coloneqq \{v'_1,...,v'_{d(v)}\} \subseteq V'$}.
The \emph{connect} phase introduces two kinds of edges in $E'$, \emph{dominant} edges and \emph{auxiliary} edges.
First, it assigns a \emph{dominant} edge $e' = (\mu'_i, v'_j)$ in $G'$ for each edge $e = (u, v)$ in $G$.
Dominant edges are created with infinite weight $\omega(e') = \infty$.
Second, it introduces as many \emph{auxiliary} edges $e''$ as necessary to create a path connecting the vertices in the set $S_v$ for each vertex $v \in V$.
Auxiliary edges are created with unitary weight $\omega(e'') = 1$.
A visual representation of the SPAC transformation is provided in Figure \ref{fig:cspac}.
Due to the infinite weight of dominant edges, vertex partitioning tools usually refrain from splitting them, causing both endpoints of a dominant edge to be grouped in the same block (alternatively, straightforward heuristics can compel both endpoints of dominant edges to belong to the same block).
Next, the block assigned to both endpoints of each dominant edge is assigned to the edge in $G$ that induced the corresponding dominant edge, thereby resulting in an edge partition of $G$. 
The SPAC method is particularly effective in practical scenarios and yields a sound, provable approximation factor under specific balance constraints.
Specifically, it approximates the balanced edge partitioning problem within $\mathcal{O}(\Delta \sqrt{\log{n} \log{k}})$, where $\Delta$ is the maximum degree of $G$~\cite{SPAC2017}.

\subparagraph*{Buffered Streaming.}
\label{subsec:Computational Model}
In the buffered streaming model, which is an extended version of the one-pass model, we load a $\delta$-sized buffer or batch of input vertices along with their edges. We make block assignment decisions only after the entire batch has been loaded. In practice, the parameter $\delta$ can be chosen in accordance with memory available on the machine. In our contribution, we use a fixed $\delta$ throughout the algorithm. For a predefined batch size of $\delta$, we load and repeatedly partition $\left \lceil{n/\delta}\right \rceil$ batches.

\subsection{Related Work}
\label{subsec:related_work}
We refer the reader to recent surveys on (hyper)graph partitioning for relevant literature~\cite{SPPGPOverviewPaper,more_recent_advances_hgp,DBLP:reference/bdt/0003S19}. 
 Here, we focus on the research on streaming vertex and edge partitioning.
Most high-quality vertex partitioners for real-world graphs use a multilevel scheme, including \texttt{KaHIP}~\cite{kabapeE}, \texttt{METIS}~\cite{karypis1998fast}, \texttt{Scotch}~\cite{Pellegrini96experimentalanalysis}, and \texttt{(mt)-KaHyPar}~\cite{mt-kahypar-d,schlag2016k}. Edge partitioning has been solved directly with multilevel hypergraph partitioners, including \texttt{PaToH}~\cite{ccatalyurek2011patoh}, \texttt{hMETIS}~\cite{hMetis}, \texttt{KaHyPar}~\cite{dSPAC2019}, \texttt{Mondriaan}~\cite{brendan2002mondriaan}, \texttt{MLPart}~\cite{alpert1998}, \texttt{Zoltan}~\cite{devine2006}, \texttt{SHP}~\cite{igor2017}, \texttt{UMPa}~\cite{umpadimacs}, and \texttt{kPaToH}~\cite{ccatalyurek2011patoh}.

\subparagraph*{Streaming (Hyper)Graph Vertex Partitioning.}
Tsourakakis~et~al.~\cite{tsourakakis2014fennel} introduce \texttt{Fennel}, a one-pass partitioning heuristic adapted from the clustering objective \emph{modularity}~\cite{brandes2007modularity}. \texttt{Fennel} minimizes edge-cuts by placing vertices in partitions with more neighboring vertices. \texttt{Fennel} assigns a vertex $v$ to the block $V_i$ that maximizes the Fennel gain function $|V_i\cap N(v)|-f(|V_i|)$, where $f(|V_i|)$ is a penalty function to respect a balancing threshold. The authors define the Fennel objective with $f(|V_i|) = \alpha \gamma \cdot |V_i|^{\gamma-1}$, in which~$\gamma$ is a free parameter and $\alpha = m \frac{k^{\gamma-1}}{n^{\gamma}}$. After parameter tuning, the authors define $\gamma=\frac{3}{2}$ and $\alpha=\sqrt{k}\frac{m}{n^{3/2}}$. The time complexity of the algorithm depends on $k$ and is given by $\mathcal{O}(nk+m)$. 
Stanton and Kliot~\cite{stanton2012streaming} propose \texttt{LDG}, a greedy heuristic for streaming vertex partitioning. \texttt{ReLDG} and \texttt{ReFennel} are re-streaming versions of \texttt{LDG} and \texttt{Fennel}~\cite{nishimura2013restreaming}. Prioritized re-streaming optimizes the ordering of the streaming process~\cite{awadelkarim2020prioritized}. 
Faraj and Schulz~\cite{HeiStream} propose \texttt{HeiStream}, which uses a generalized weighted version of the Fennel gain function in a buffered streaming approach. 
Eyubov et al.~\cite{freight_paper} introduce \texttt{FREIGHT}, a streaming hypergraph partitioner that adapts the Fennel objective function to partition vertices of a hypergraph on the fly.

\subparagraph*{Streaming Edge Partitioning.}
One-pass streaming edge partitioners include hashing-based partitioners like \texttt{DBH}~\cite{xie2014distributed}, constrained partitioners like \texttt{Grid} and \texttt{PDS}~\cite{jain2013gridpds}, and \texttt{HDRF}, proposed by Petroni et al.~\cite{petroni2015hdrf}. \texttt{HDRF} exploits the skewed degree distribution of real-world graphs by prioritizing vertex replicas of high-degree vertices. \texttt{HDRF} outperforms \texttt{DBH}, \texttt{Grid} and \texttt{PDS} in solution quality, but has a longer runtime. Zhang et al.~\cite{zhang2017graph} introduced \texttt{SNE}, a streaming version of the in-memory edge partitioner \texttt{NE} that utilizes sampling methods. \texttt{SNE} produces better solution quality than \texttt{HDRF}, but with increased memory consumption and runtime~\cite{psl}. In contrast to one-pass streaming models, \texttt{RBSEP} uses a buffered approach to postpone assignment decisions for edges with limited neighborhood partitioning information available during streaming~\cite{RBSEP}. Additionally, Mayer et al.~\cite{mayer2018adwise} introduce \texttt{ADWISE}, a window-based streaming edge partitioner, which uses a dynamic window size that adapts to runtime constraints. %

Mayer et al.~\cite{2PS2020} subsequently propose \texttt{2PS-HDRF}, a two-phase re-streaming algorithm for edge partitioning, using \texttt{HDRF} as the scoring function in its final partitioning step. The first phase uses a streaming clustering algorithm to gather information about the global graph structure; in the second phase, the graph is re-streamed and partitioned, using information obtained from clustering to make edge partitioning decisions. 
Mayer et al.~\cite{psl} modify \texttt{2PS-HDRF} to propose \texttt{2PS-L}, which runs in time independent of $k$. \texttt{2PS-L} switches from \texttt{HDRF} to a new scoring function in the final partitioning step to remove its dependency on $k$, and thus achieves a time complexity of $\mathcal{O}(|E|)$. 
\texttt{2PS-L} outperforms \texttt{ADWISE}; it is faster than \texttt{HDRF} and \texttt{2PS-HDRF}, particularly at large $k$ values, but has lower solution quality. \texttt{2PS-HDRF} achieves 50\% better solution quality than \texttt{2PS-L}~\cite{psl}.

Sajjad~et~al.~\cite{sajjad2016boosting} propose \texttt{HoVerCut}, a platform for streaming edge partitioners, which can scale in multi-threaded and distributed systems by decoupling the state from the partitioner.
Hoang~et~al.~\cite{hoang2019cusp} propose \texttt{CuSP}, a distributed and parallel streaming framework to partition edges based on user-defined policies.
\texttt{CuSP} is programmable and can express common streaming edge partitioning strategies from the literature.

\section{Buffered Streaming Edge Partitioning}
\label{sec:Buffered Streaming Edge Partitioning}

In this section, we present our algorithms, \texttt{HeiStreamE} and \texttt{FreightE}.
First, we provide an overview of \texttt{HeiStreamE}'s iterative structure. Subsequently, we detail its input format and buffered graph model, and describe how it uses multilevel vertex partitioning to solve this model. Lastly, we discuss how \texttt{FreightE} builds a hypergraph representation to partition edges using a streaming hypergraph partitioner.
\begin{algorithm}[b]
	\caption{Overall Structure of \texttt{HeiStreamE}}
	\label{alg:overall}
	\begin{algorithmic}[1] %
		\For{$b \in \{1,\ldots,\lceil\frac{n}{\delta}\rceil\}$}
			\State Load subgraph $G_b$ from input graph $G$
			\State Build model $\beta_b$ from $G_b$
			\State Run multilevel vertex partitioning on $\beta_b$ 
			\State Permanently assign corresponding edges of $G$ in $G_b$ to blocks
		\EndFor
	\end{algorithmic}
\end{algorithm}

\subsection{Overall Algorithm}
\label{subsec:Overall Algorithm}

Our framework draws inspiration from \texttt{HeiStream}~\cite{HeiStream}.
We slide through the input graph $G$ by iteratively performing the following series of operations until all the edges of $G$ are assigned to blocks.
First, we load a batch composed of $\delta$ vertices and their associated neighborhood, thereby obtaining a subgraph~$G_b$ contained within the graph~$G$.
This operation yields edges connecting vertices within the current batch, and edges connecting vertices in the current batch to vertices streamed in previous batches.
Second, we build a model $\beta_b$ corresponding to $G_b$, where the edges of $G_b$ are transformed into vertices. Additionally, we incorporate a representation of block assignments from previous batches into $\beta_b$.
Third, we partition $\beta_b$ using a multilevel vertex partitioning algorithm that has been shown to be effective in the context of buffered streaming~\cite{HeiStream}.
We conclude by permanently assigning the edges in $G$ that correspond to vertices in our model $\beta_b$ to \hbox{their respective blocks}.
Algorithm \ref{alg:overall} summarizes the general structure of \texttt{HeiStreamE}, which is illustrated in Figure~\ref{fig:structure}.

\begin{figure}[t]
	\centering
	\includegraphics[width=0.8\textwidth]{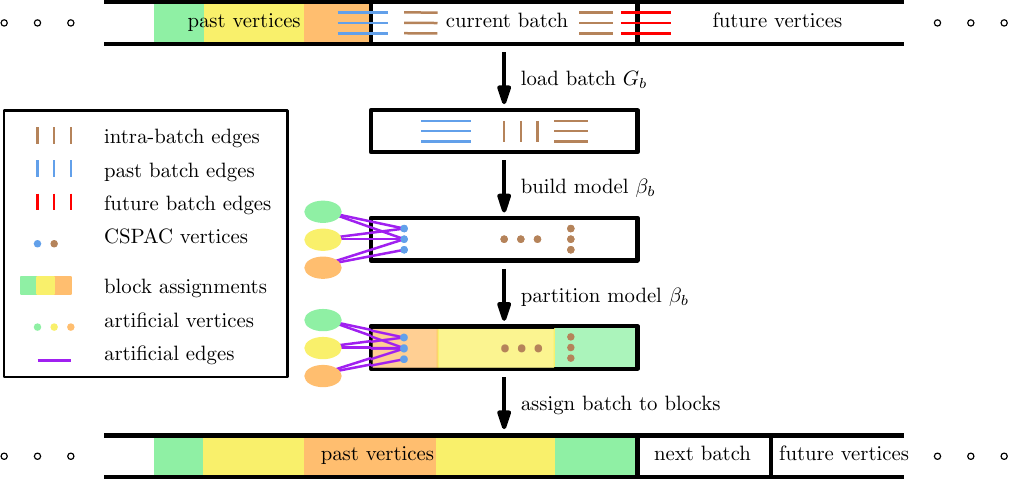}
	\caption{
	Detailed structure of \texttt{HeiStreamE}. 
	The algorithm starts by loading a batch graph~$G_b$ consisting of vertices and their edges to the current batch and previous batches.
	Subsequently, it builds a meaningful model~$\beta_b$ from the batch graph, transforming edges into vertices, and incorporating a synthetic representation of the assignments made in previous batches.
	This model is then partitioned using a multilevel algorithm.
	Lastly, the edges from the loaded batch, which correspond to vertices in the partitioned batch model, are permanently assigned to blocks.
	This process is repeated for subsequent batches until the entire graph has been partitioned.
	}
	\label{fig:structure}
\end{figure}

\subsection{Input and Batch Format}
\label{subsec:Input and Batch Format}

\texttt{HeiStreamE} uses a vertex-centric input format and a buffered streaming approach,  which refers to the sequential process of loading and handling the input graph in batches.
Within each batch, it loads $\delta$ vertices one at a time along with their neighborhood, where $\delta$ is a parameter that defines the buffer size. Similar input formats are commonly used in streaming algorithms for vertex partitioning~\cite{awadelkarim2020prioritized,freight_paper,HeiStream,StreamMultiSection,jafari2021fast,tsourakakis2014fennel}
and are consistent with graph formats commonly found in publicly available real-world graph datasets, such as the METIS format.
Each batch $b \in \{1,\ldots,\lceil\frac{n}{\delta}\rceil\}$ corresponds to a subgraph denoted as $G_b=(V_b,E_b)$ within the graph $G$.
This subgraph is constructed as follows. 
Its vertex set $V_b$ includes the $\delta$ vertices from the current batch, labeled within the domain $[0,\delta -1]$, as well as the $p$ vertices from past batches that share at least one edge with vertices in the current batch, labeled within the domain $[\delta,\delta + p - 1]$.
Similarly, its edge set $E_b$ comprises of edges with one endpoint in the current batch and the other endpoint in either the current batch or previous batches, expressed as $E_b = \{(u,v) \in E \mid \text{Batch}(u)=b \land \text{Batch}(v)\leq b\}$.
Edges with an endpoint in future batches are discarded, ensuring that each edge belongs exclusively to a unique batch graph $G_b$ across all batches.

\subsection{Model Construction}
\label{subsec:Model Construction}

Our model construction consists of two steps, detailed in this section.
First, using the batch graph $G_b$, we create a corresponding \emph{Contracted SPlit-And-Connect} (CSPAC) graph, denoted as $S^*_b$. 
In $S^*_b$, edges from $G_b$ are directly represented as vertices, while vertices from $G_b$ are indirectly represented as edges. 
Subsequently, we create a graph model $\beta_b$ based on $S^*_b$, which incorporates a representation of block assignments from prior batches.

\subparagraph*{CSPAC Transformation.}
\label{par:CSPAC Transformation}

The CSPAC transformation is a faster and more condensed variant of the SPAC transformation conceived by Li~et~al.~\cite{SPAC2017}, as described in Section \ref{subsec:basic_concepts}. Below we explain how the SPAC transformation evolves into the CSPAC transformation, and how this transformation is applied to~$G_b$ \hbox{to yield the CSPAC graph $S^*_b$}.

The SPAC graph $G'$ of a graph $G$ consists of dominant edges, i.e., edges with a weight of infinity that have a one-to-one correspondence with edges of the original graph, and auxiliary edges, which define a path between vertices of the SPAC graph. The CSPAC transformation is derived from the SPAC graph by contracting the dominant edges into vertices.
Due to construction, the dominant edges in $G'$ do not share any endpoints; they effectively form a matching, ensuring a consistent contraction. Further, each endpoint of the auxiliary edges is incident to a single dominant edge.
Thus, the contraction of all dominant edges produces a coarser graph, in which every vertex represents a unique edge in the original graph $G$, and their connections correspond to the auxiliary edges in graph $G'$.
The CSPAC transformation is illustrated in Figure \ref{fig:cspac}.

\begin{figure}[t]
	\centering
	\includegraphics[width=0.8\textwidth]{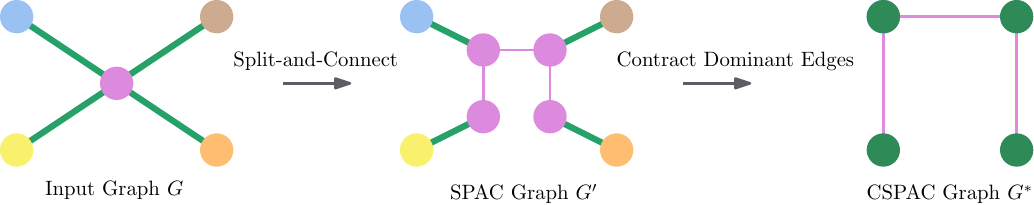}
	\caption{
		Building SPAC and CSPAC Graphs:
		The SPAC graph, denoted as $G'$, features $d(v)$ split vertices for every vertex $v$ in the original graph $G$.
		These split vertices are represented in the same color as the vertex they originate from.
		Every (thick green) edge from $G$ is directly converted into a distinct dominant (thick green) edge in $G'$ that connects corresponding split vertices.
		The auxiliary (thin) edges in $G'$, which create a path between split vertices, are depicted in the same color as the split vertices they link.
		The CSPAC graph $G^*$ is formed by contracting the dominant edges of $G'$.
		The vertices in $G^*$ represent the edges in $G$, while the edges in $G^*$ mirror the auxiliary edges in $G'$.
	}
	\label{fig:cspac}
\end{figure}

In \texttt{HeiStreamE}, we derive the CSPAC graph $S^*_b$ directly from the batch graph $G_b$, bypassing the construction of the intermediary SPAC graph.
The procedure for this direct transformation is as follows. 
For each vertex $u \in V_b$, each of its outgoing edges $e = (u, v) \in E_b$ induces a vertex $u^*$ in $S^*_b$, if $u < v$ to avoid redundancy. As each vertex is constructed,  it is connected to at most two other vertices in $S^*_b$ that are induced by other outgoing edges of $u$ in $G_b$ to form a path.  
This way, each undirected edge  $e \in E_b$, which would have induced a dominant edge in the SPAC graph, is represented as a vertex in $S^*_b$. Further, each auxiliary edge of the SPAC graph is directly integrated into $S^*_b$ to form paths between vertices in $S^*_b$.
The direct construction of the CSPAC graph maintains the same computational complexity as the SPAC construction alone, specifically $\mathcal{O}(|V_b|+|E_b|)$, where $|V_b|$ and $|E_b|$ are the number of vertices and edges of $G_b$ respectively.
In practice, building $S^*_b$ directly from $G_b$ offers a time-saving advantage compared to the alternative method of building the SPAC graph and then \hbox{contracting the dominant edges}.

In each batch, $S^*_b$ has $|E_b|$ vertices and $2|E_b|-|V_b|$ edges, i.e., it is linear in the size of the batch graph $G_b$. In contrast to the SPAC graph, the CSPAC graph has fewer vertices and edges. Further, where a vertex partitioner might cut a dominant edge of the SPAC graph, every vertex partition of the CSPAC graph corresponds to a valid edge partition of the original graph. 
Thus, there is no need for a verification step to transform a vertex partition of $S^*_b$ into an edge partition of $G_b$.

Theorem \ref{theorem:epvp} shows that when computing a vertex partition of $S^*_b$ to minimize the edge-cut, the corresponding edge partition of $G_b$ will also have a minimized number of vertex replicas.
The SPAC approximation factor shown by Li~et~al.~\cite{SPAC2017} is also directly valid for $S^*_b$.

\begin{theorem}
	\label{theorem:epvp}
	For any vertex partition $vp(S^*_b)$ of the CSPAC graph $S^*_b$ with edge-cut $cost(vp(S^*_b))$, there exists a corresponding edge partition $ep(G_b)$ of the batch graph $G_b$ with a number of vertex replicas $cost(ep(G_b))$, satisfying $cost(ep(G_b)) \leq cost(vp(S^*_b))$, which establishes a lower bound spanning the set of possible CSPAC graphs $S^*_b$ \hbox{associated with $G_b$}.
\end{theorem}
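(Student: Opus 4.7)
The plan is to exhibit a single canonical edge partition $ep(G_b)$ induced by $vp(S^*_b)$ and then to bound its replica cost edge-by-edge through the auxiliary paths of the CSPAC construction. By construction, each vertex of $S^*_b$ is in one-to-one correspondence with an edge of $G_b$, so I would simply define $ep(G_b)$ by setting the block of edge $e \in E_b$ equal to the block assigned by $vp(S^*_b)$ to the vertex $e^* \in V(S^*_b)$ that represents $e$. This map is well-defined and produces a valid $k$-edge partition of $G_b$, which already handles the ``there exists'' part of the statement; the remainder of the proof is to compare the two costs.

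Next I would analyse the contribution of a single vertex $v \in V_b$ to both sides. By the direct CSPAC construction described in Section~\ref{subsec:Model Construction}, the $d(v)$ edges incident to $v$ in $G_b$ give rise to $d(v)$ vertices of $S^*_b$ that are linked, via auxiliary edges, into a path $P_v$. Moreover, the auxiliary paths $P_v$ for distinct $v \in V_b$ are pairwise edge-disjoint and together exhaust all edges of $S^*_b$, so the edge-cut $cost(vp(S^*_b))$ decomposes as $\sum_{v \in V_b} c_v$, where $c_v$ is the number of auxiliary edges of $P_v$ whose two endpoints lie in different blocks of $vp(S^*_b)$.

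The key combinatorial step is the path lemma: if the vertices of a path take $\lambda(v)$ distinct block labels, then at least $\lambda(v)-1$ of its edges are cut, simply because each transition between two maximal monochromatic sub-paths contributes one cut edge. I would prove this by walking along $P_v$ and counting the number of label changes. Since, under the induced $ep(G_b)$, the number of blocks in which $v$ is replicated equals $\lambda(v)$ (the number of distinct block labels seen along $P_v$), this gives $\lambda(v) - 1 \le c_v$ for every $v$, and hence $\sum_v (\lambda(v)-1) \le cost(vp(S^*_b))$.

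Finally, I would assemble these per-vertex bounds into the claimed inequality $cost(ep(G_b)) \le cost(vp(S^*_b))$ using the replica definition from Section~\ref{subsec:basic_concepts}, and remark that the argument is independent of the particular ordering of edges along each $P_v$, hence the bound spans all CSPAC graphs $S^*_b$ that can be associated with $G_b$. The main obstacle I anticipate is purely bookkeeping: making sure that the auxiliary-edge decomposition $\sum_v c_v = cost(vp(S^*_b))$ is tight (no double counting and no missing edges), which follows from the fact that, by construction, every edge of $S^*_b$ is an auxiliary edge belonging to exactly one path $P_v$. Once that decomposition is in place, the path lemma does the rest.
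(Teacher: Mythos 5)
Your construction of $ep(G_b)$ via the edge--vertex bijection and your per-vertex argument along the auxiliary paths are essentially the paper's own route. The paper phrases the key step as: a connected (sub)graph whose vertices are spread over $x+1$ nonempty blocks must have at least $x$ cut edges; your transition-counting path lemma is exactly this fact specialized to the path $P_v$, and the edge-disjoint decomposition of the edges of $S^*_b$ into the paths $P_v$ (valid because $G_b$ is simple, so an edge of $S^*_b$ between two edge-vertices can only come from their unique common endpoint) plays the same bookkeeping role in both arguments. Degree-$0$/$1$ vertices contribute nothing on either side, matching the paper's observations (a) and (b). So the existence of $ep(G_b)$ and the inequality $cost(ep(G_b)) \leq cost(vp(S^*_b))$ are handled correctly and in essentially the same way.

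The one part of the statement you do not address in the paper's sense is the final clause about the bound ``spanning the set of possible CSPAC graphs''. You read it as uniformity: since your argument never uses the particular ordering of the edge-vertices along each $P_v$, the inequality holds for every CSPAC graph associated with $G_b$, and $cost(ep(G_b))$ is therefore a lower bound over the whole family. The paper proves something stronger, namely that this lower bound is attained: for each vertex $u$ it groups the edge-vertices representing the edges incident to $u$ by their assigned block, forms one monochromatic subpath per block, and concatenates these subpaths into a single path, producing a specific CSPAC graph for which $cost(ep(G_b)) = cost(vp(S^*_b))$, i.e.\ exactly $x$ cut edges per vertex with $x$ replicas. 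If the last clause is meant to assert tightness over the family of CSPAC graphs (as the paper's claim (iii) does), you would need to add this short construction; it does not follow from your ordering-independence remark alone.
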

\begin{proof}
	
	Our proof can be delineated through three consecutive claims.
	(i) The existence of a singular edge partition $ep(G_b)$ that corresponds to a specified vertex partition $vp(S^*_b)$.
	(ii) The general validity of the inequality $cost(ep(G_b)) \leq cost(vp(S^*_b))$.
	(iii) The validity of the equality $cost(ep(G_b)) = cost(vp(S^*_b))$ for a specific \hbox{CSPAC graph $S^*_b$}.
	
	Claim (i) trivially holds, as $ep(G_b)$ stems directly from $vp(S^*_b)$ by virtue of the one-to-one correspondence between edges in $G_b$ and vertices in $S^*_b$.
	For the proof of assertions (ii) and (iii), consider the following.
	(a) There can be no replicas of vertices with degree lower than 2.
	(b) Vertices in $G_b$ with a degree lower than 2 are not represented by any edges in $S^*_b$.
	(c) Vertices $u$ in $G_b$ that possess a degree $d(u) \geq 2$ are represented in $S^*_b$ through unique, edge-disjoint paths, each connecting the $d(u)$ vertices in $S^*_b$ corresponding to the \hbox{edges incident to $u$ in $G_b$}.
	
	To prove (ii), we show that the number of replicas of any vertex $u$ in $ep(G_b)$ does not exceed the number of cut edges directly induced by $u$ in $vp(S^*_b)$.
	From (a) and (b), it trivially holds for vertices with a degree lower than 2.
	Assuming there are $x > 0$ replicas of $u$ in $ep(G_b)$, it implies that edges incident on $u$ are distributed across $x+1$ nonempty blocks.
	According to (c), $u$ is uniquely represented by an edge-disjoint path connecting the vertices in $S^*_b$ that correspond to the edges of $u$ in $G_b$.
	When the vertices of a connected (sub)graph are partitioned into $x+1$ blocks, these blocks themselves are interconnected.
	Therefore, the edge-cut directly attributable to vertex $u$ is at least $x$, which \hbox{completes the proof for (ii)}.
	
	To prove (iii), we show how to build a valid CSPAC graph $S^*_b$, such that $cost(ep(G_b)) = cost(vp(S^*_b))$.
	For vertices of $G_b$ with a degree lower than 2, the equality trivially holds based on (a) and (b).
	For a vertex $u$ in $ep(G_b)$ with $x > 0$ replicas (edges distributed across $x+1$ nonempty blocks), we create $x+1$ independent paths connecting the vertices in $S^*_b$ that represent the edges incident to $u$ in $G_b$. Vertices in each path are then assigned to a common block.
	As these paths are between vertices of the same block, they have no cut edges.
	Subsequently, these paths are interlinked to form a unified path.
	This introduces exactly $x$ cut edges directly associated with vertex $u$, thereby concluding the proof for condition (iii) and \hbox{the overall proof}.
\end{proof}

\begin{figure}[t]
	\centering
	\includegraphics[width=0.8\textwidth]{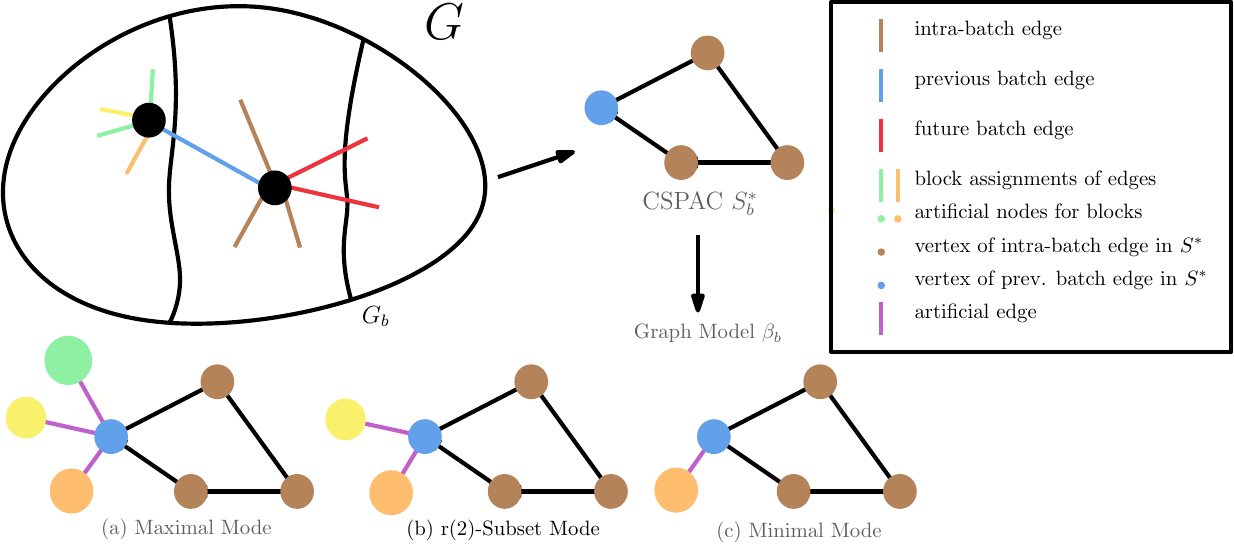}
	\caption{Graph model $\beta_b$ construction. $\beta_b$ is obtained by appending past assignment decisions to $S^*_b$. If a vertex of the current batch graph $u \in G_b$ has an edge $e = (u, v)$ to a previous batch (colored blue), we connect the CSPAC vertex $u^*$ induced by $e$ to artificial vertices representing blocks assigned to edges incident on $v$ as follows: (a) \emph{Maximal} Mode: $u^*$ connects to all blocks incident on~$v$ (b) \emph{$r$-Subset} Mode: $u^*$ connects to $r$ random blocks incident on~$v$ (c) \hbox{\emph{Minimal} Mode: $u^*$} connects to the block assigned to the most recently partitioned edge incident on $v$.}
	\label{fig:modes}
        \vspace*{-.5cm}
\end{figure}
\subparagraph*{Integrating Connectivity Information.}
\label{par:Integrating Connectivity Information}
Directly partitioning the CSPAC graph $S^*_b$ limits the partitioner's view to the current batch, as $S^*_b$ does not take into account block assignments from previous batches.  Specifically, when assigning a block to a vertex $u^* \in S^*_b$ induced by an edge $e = (u, v) \in E_b$, where $u$ is a vertex of the current batch and $v$ is a vertex of some previous batch, the partitioner might replicate $v$ into a new block in the absence of global information. 
To solve this, we extend the CSPAC graph $S^*_b$ with connectivity information derived from previous batch assignments to obtain the graph model $\beta_b$. 
We construct $\beta_b$ by augmenting $S^*_b$ with $k$ artificial vertices representing the $k$ partition blocks.

Each artificial vertex $i$ corresponds to an existing partition block $E_i$ in its current state, i.e., the weight of each artificial vertex $i$ is the weight of its corresponding block $E_i$ filled with edges that have been assigned to it \hbox{from previous batches}.
An edge $e = (u,v) \in E_b$ of the current batch is represented in our model $\beta_b$ as a vertex that connects to an artificial vertex $i$ if $v$ is a vertex of a previous batch whose some incident edge has already been assigned to block $E_i$.
However, in a streaming setting, there is limited knowledge of edge connectivity, i.e., it is not possible to directly determine which edges from previous batches are adjacent to edges from the current batch. 
To overcome this constraint, we maintain an array $B$ of size $n$ throughout the streaming process.
This array records, for each vertex $u \in V$, block(s) assigned to edges incident on it. Then, in $\beta_b$, a vertex $u^* \in S^*_b$ induced by an edge $e = (u,v) \in E_b$ is connected to any artificial vertices $i$ representing blocks recorded in $B[v]$. An added benefit of this model construction is that we do not need to maintain a data structure of size $m$ in our algorithm.

We propose three configurations for $\beta_b$, which vary in how they use $B$, that we name, in decreasing level of exactness, \emph{maximal}, \emph{$r$-Subset}, and \emph{minimal}. Each configuration has a runtime, memory and solution-quality trade-off. Let $u^*$ be a vertex in $S^*_b$ induced by an edge $e = (u,v) \in E_b$.
 In the \emph{maximal} model $\beta_b$, $u^*$ is connected to \emph{all} \hbox{artificial vertices $i$} representing blocks recorded in $B[v]$. The memory required to store the array $B$ in \hbox{the \emph{maximal} setup is $\mathcal{O}(nk)$}.
Theorem \ref{theorem:epvp_general} demonstrates that computing a vertex partition of the \emph{maximal} model $\beta_b$ to minimize edge-cut corresponds to an edge partition of $G_b$ that contributes a minimized number of new vertex replicas to the overall edge partition of the input graph $G$.
In the \emph{$r$-Subset} model,  $u^*$ is connected to \emph{a sample of r} artificial vertices $i$ representing blocks recorded in $B[v]$, where $r$ is a parameter.
Here, $B$ is identical as in the \emph{maximal} setup, but the model $\beta_b$ is more concise, allowing for a faster partitioning phase.
In the \emph{minimal} model, $u^*$ is connected to a single artificial vertex $i$ representing \emph{the most recent block} assigned to $v$ in a previous batch. We only store the latest assignment per vertex in $B$, thus the memory requirement is $\mathcal{O}(n)$, and the model $\beta_b$ is also more concise than in the other two setups. We illustrate the various \hbox{configurations in Figure~\ref{fig:modes}}.

\begin{theorem}
	\label{theorem:epvp_general}
	For any vertex partition $vp(\beta_b)$ of the \emph{maximal} model $\beta_b$ with edge-cut $cost(vp(\beta_b))$, there exists a corresponding edge partition $ep(G_b)$ of the batch graph $G_b$ that, when incorporated into the already partitioned section of the input graph $G$, introduces a number $\gamma$ of new vertex replicas, \hbox{satisfying $\gamma \leq cost(vp(\beta_b))$}.
\end{theorem}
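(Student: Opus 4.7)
The plan is to generalize the proof of Theorem~\ref{theorem:epvp} by absorbing the role of the $k$ artificial vertices into the cut-attribution bookkeeping. Given $vp(\beta_b)$, I define the corresponding edge partition by assigning each edge $e \in E_b$ to the block $b(e^*)$ of its CSPAC vertex $e^* \in S^*_b \subseteq \beta_b$. Writing $T_v = \{b(e^*) \mid e \in E_b \text{ is incident to } v\}$ for the set of blocks touched by $v$ in batch~$b$, the contribution of $v$ to $\gamma$ splits into $\gamma_v = \max(0, |T_v| - 1)$ when $v$ first appears in the current batch (so $P_v = \emptyset$ and no prior replica exists) and $\gamma_v = |T_v \setminus P_v|$ when $v$ has already been assigned in earlier batches ($P_v \neq \emptyset$), using the convention of counting replicas as ``blocks minus one'' that underlies the proof of Theorem~\ref{theorem:epvp}.

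For each $v$ I would exhibit at least $\gamma_v$ distinct cut edges of $\beta_b$ attributable solely to~$v$. Current-batch vertices are handled exactly as in Theorem~\ref{theorem:epvp}: the path $\pi_v$ in $S^*_b$ through the CSPAC vertices representing $v$'s edges is connected and spans $|T_v|$ blocks, hence carries at least $|T_v| - 1$ cut auxiliary edges. For past-batch vertices the artificial vertices take over. Since $a_i$ represents block $E_i$ and, through its block-aligned weight $c(E_i)$, is kept in block $i$, any new block $j \in T_v \setminus P_v$ satisfies $j \neq i$ for every $i \in P_v$. Hence for each such $j$ one can pick a CSPAC vertex $e^*$ with $b(e^*) = j$ and any $i \in P_v$ (nonempty because $v$ is a past-batch vertex), and the edge $(e^*, a_i) \in \beta_b$ is cut because $b(e^*) = j \neq i = b(a_i)$; varying $j$ over $T_v \setminus P_v$ produces distinct witnesses, since distinct new blocks force distinct $e^*$'s.

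To conclude I would verify that these witnesses are disjoint across different vertices. Auxiliary edges of distinct CSPAC paths are edge-disjoint by the construction of $S^*_b$, and an artificial-vertex edge $(e^*, a_i)$ is attributable only to the unique past-batch endpoint of~$e$, which is well-defined because $E_b$ enforces that at most one endpoint of each edge lies outside the current batch. Auxiliary and artificial-vertex witnesses are automatically disjoint as they are of different types. Summing $\gamma_v$ over all $v$ then gives $\gamma \leq cost(vp(\beta_b))$. The main obstacle is the handling of artificial-vertex assignments: if some $a_i$ is placed in a block other than~$i$, the cut edges intended to witness a new replica may collapse. This should be discharged by the observation that $a_i$ carries the full weight $c(E_i)$ of block~$E_i$, so the balance constraint effectively pins $a_i$ to block~$i$; alternatively, one can replace $vp(\beta_b)$ by the variant in which every $a_i$ is reassigned to block~$i$ and argue that this modification does not increase the edge-cut, preserving the bound.
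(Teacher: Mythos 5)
Your proposal follows essentially the same route as the paper: it reduces the current-batch contribution to the edge-disjoint path argument of Theorem~\ref{theorem:epvp} and charges each new replica of a past-batch vertex to a distinct cut artificial edge, which is exactly the paper's (much terser) argument. One caveat: the paper treats the artificial vertices as pre-assigned to, and stationary in, their respective blocks, so the case you worry about in your final paragraph does not arise under the intended reading of the theorem; note, though, that neither of your proposed discharges for that case would actually hold up --- the balance constraint does not pin block labels, and reassigning $a_i$ back to block~$i$ can strictly increase the edge-cut (e.g.\ when a CSPAC vertex shares a block with a misplaced $a_i$), so had the statement really quantified over partitions that move artificial vertices, it would in fact fail.
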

\begin{proof}
	The only disparity between the CSPAC graph $S^*_b$ and the maximal model~$\beta_b$ is the presence of artificial vertices and edges in $\beta_b$.
	If there are no artificial edges in $\beta_b$, it implies that none of the edges in the batch graph $G_b$ are adjacent to edges in $G$ that have previously been assigned to blocks.
	In this scenario, Theorem \ref{theorem:epvp} provides sufficient grounds to establish the claim.
	If $\beta_b$ contains artificial edges, each artificial edge signifies a unique adjacency in $G$ of an edge in the batch graph $G_b$ to edges already assigned to blocks in previous batches.
	In this scenario, we complete the proof of the claim by noting that the number of cut artificial edges in $vp(\beta_b)$ cannot be less than the number of new replicas introduced exclusively for vertices \hbox{contained in previous batches}.
\end{proof}

\subsection{Partitioning}
\label{subsec:Partitioning}

In this section, we describe how we partition our model $\beta_b$. We employ a vertex partitioning algorithm on $\beta_b$, specifically an adapted version of the multilevel weighted Fennel algorithm utilized in \texttt{HeiStream}~\cite{HeiStream}.
We describe this algorithm and then present a modification to the initial partitioning phase to enhance our runtime performance.
Lastly, we discuss possible adaptations to the Fennel parameter $\alpha$.

\ifFull
\subparagraph*{Generalized Fennel.}
\label{par:Generalized Fennel}
A fundamental concept to understand the multilevel Fennel algorithm is the generalized Fennel function proposed in~\cite{HeiStream}.
This function extends the \texttt{Fennel}~\cite{tsourakakis2014fennel} score function to allow it to be consistently executed across all levels of a multilevel scheme.
This generalized score function assigns $u$ to a block $i$ that maximizes Equation (\ref{eq:gen_fennel}) such that $f(c(V_i)) =  \alpha * \gamma * c(V_i)^{\gamma-1}$. 
Note that this represents a direct extension of the original, unweighted \texttt{Fennel} definition~\cite{tsourakakis2014fennel}, where the absence of edge weights simplifies the first part of the equation to $|V_i| \cap N(u)$ and the absence of vertex weights keeps the second part of the equation to $\alpha * \gamma * |V_i|{\gamma-1}$.
Finally, \texttt{HeiStream}~\cite{HeiStream} keeps the original value of the \texttt{Fennel} parameters $\alpha$ and $\gamma$ in order to keep consistency.
Faraj and Schulz~\cite{HeiStream} establish the validity of Theorem~\ref{theo:generalized_fennel}, which we reproduce here.

\begin{equation}
	\sum_{ v \in V_i \cap N(u)}{ \omega(u,v)} - c(u) f(c(V_i))
	\label{eq:gen_fennel}
\end{equation}

\begin{theorem}[Theorem~$3.1$ in~\cite{HeiStream}]
	If a set of vertices $S$ is contracted into a vertex $w$, the generalized Fennel gain function of $w$ is equal to the sum of the generalized Fennel gain functions of all vertices in $S$.
	\label{theo:generalized_fennel}
\end{theorem}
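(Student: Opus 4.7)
The plan is to unfold the definition of the generalized Fennel gain on both sides and show, term by term, that the contraction rules force the two expressions to coincide. Fix a block $V_j$ that contains no vertex of $S = \{u_1, \ldots, u_\ell\}$ (this is the natural setting for comparing a gain at the coarser level with the sum of the corresponding gains at the finer level, since in the multilevel scheme $V_j$ at the coarse level is indexed by a set of coarse vertices disjoint from $w$). Denote by $g(x, V_j)$ the generalized Fennel score of Equation~(\ref{eq:gen_fennel}) evaluated at a vertex $x$ for block $V_j$.

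First, I would handle the penalty term. Since contraction sets $c(w) = \sum_{i=1}^{\ell} c(u_i)$ and the factor $f(c(V_j))$ depends only on the aggregate block weight (which is identical in the coarse and fine view, as the contraction preserves total vertex weight and $V_j \cap S = \emptyset$), one obtains immediately
\[
c(w)\, f(c(V_j)) \;=\; \sum_{i=1}^{\ell} c(u_i)\, f(c(V_j)).
\]
So the penalty contribution of $w$ equals the sum of the penalty contributions of the vertices in $S$.

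Second, I would handle the neighborhood term. By the contraction rule, for every $x \notin S$ we have $\omega(w, x) = \sum_{i : u_i \in S,\, x \in N(u_i)} \omega(u_i, x)$, and $x \in N(w)$ iff $x \in \bigcup_i N(u_i) \setminus S$; edges internal to $S$ collapse into self-loops and are discarded. Using $V_j \cap S = \emptyset$, swap the order of summation:
\[
\sum_{x \in V_j \cap N(w)} \omega(w, x) \;=\; \sum_{x \in V_j} \sum_{i=1}^{\ell} \mathbf{1}[x \in N(u_i)] \,\omega(u_i, x) \;=\; \sum_{i=1}^{\ell} \sum_{x \in V_j \cap N(u_i)} \omega(u_i, x).
\]
Combining this with the penalty identity yields $g(w, V_j) = \sum_{i=1}^{\ell} g(u_i, V_j)$, which is the claim.

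The main subtlety, and the only step that requires care, is the treatment of edges with both endpoints in $S$: one must observe that such edges contribute $0$ to the right-hand side (their endpoints are not in $V_j$ since $V_j \cap S = \emptyset$) and contribute $0$ to the left-hand side as well (they become self-loops under contraction and are ignored). Once this is made explicit, the remaining manipulation is purely a bookkeeping exchange of two sums, and the decomposition goes through in both the weighted and unweighted regimes, so the multilevel consistency claimed by the theorem holds.
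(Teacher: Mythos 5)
Your proposal is correct and follows essentially the same route as the source of this result: the paper itself does not reprove the statement (it is quoted from HeiStream, Theorem~3.1), and the argument there is exactly your linearity argument — under contraction the vertex weight of $w$ is the sum of the weights in $S$ and the edge weight to any outside vertex is the sum of the corresponding fine edge weights, so the penalty term and the neighborhood term each decompose, with edges internal to $S$ contributing nothing on either side. Your explicit handling of the self-loop/internal-edge bookkeeping and of the assumption $V_j \cap S = \emptyset$ (which is indeed the intended setting, since contraction happens during coarsening before any of these vertices are assigned and the artificial block vertices are never contracted into clusters) supplies precisely the care this argument needs, so nothing is missing.
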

\fi

\subparagraph*{Multilevel Fennel.}
\label{par:Multilevel Fennel}
Each per-batch graph model $\beta_b$ is partitioned using a multilevel partitioning scheme consisting of three successive phases, coarsening, initial partitioning, and uncoarsening, as depicted in Figure~\ref{fig:MSGC}. 

In the coarsening phase, the algorithm computes a clustering and contracts the graph at each level until it is smaller than a specified threshold.
These clusters are computed with label propagation while adhering to size constraints~\cite{pcomplexnetworksviacluster}.
The clustering algorithm ignores artificial vertices and edges during the coarsening phase, to ensure that they are never contracted and that previous block assignment decisions are available at the coarsest level.
For a graph with $n$ vertices and $m$ edges, a single round of size-constrained label propagation can be executed in $\mathcal{O}(n+m)$ time.
Initially, each vertex is placed in its own cluster, and in subsequent rounds vertices move to the cluster with the strongest connection, with a maximum of $L$ rounds, where $L$ is a tuning parameter.
The coarsening phase ends as soon as the graph has fewer vertices than a threshold of $\mathcal{O}(\max({|\beta_b|/k,k}))$, where $|\beta_b|$ is the number of vertices of $\beta_b$. 
For large buffer sizes this threshold simplifies to $\mathcal{O}(|\beta_b|/k)$, while for small buffer sizes, \hbox{it becomes $\mathcal{O}(k)$}.

Following the coarsening phase, an initial partitioning assigns all non-artificial vertices to blocks using the generalized Fennel algorithm~\cite{HeiStream} with a strict balancing constraint $L_{\max}$.
After initial partitioning, the current solution is transferred to the next finer level by mapping the block assignment of each coarse vertex to its constituent vertices at the finer level.
Subsequently, a local search algorithm is applied at each level, which utilizes the size-constrained label propagation approach employed in the contraction phase but with a modified objective function.
Specifically, when visiting a non-artificial vertex, we reassign it to a neighboring block to maximize the generalized Fennel gain function while strictly adhering to the balancing constraint $L_{\max}$, considering only adjacent blocks in contrast to the initial partitioning which considers all blocks.
This ensures that a single algorithm round remains linear in the current level's size.
Artificial vertices remain stationary during the process, but unlike during coarsening, they are not excluded from label propagation, as they contribute to the generalized Fennel gain function.

\subparagraph*{Faster Initial Partitioning.}
\label{par:Faster Initial Partitioning}
We adopt a modified Fennel function to enhance the initial partitioning step in \texttt{HeiStreamE} and demonstrate that this approach produces a better runtime without \hbox{a decrease in solution quality}.

For small buffer sizes, \texttt{HeiStream} has a linear dependency on $k$ for overall partitioning time, as during initial partitioning each vertex $u \in \beta_b$ is assigned to the block with the highest score among all $k$ blocks. To address this dependency, we adopt Eyubov et al.'s strategy~\cite{freight_paper} from streaming hypergraph partitioning to evaluate scores more efficiently, removing the runtime dependency on $k$.
For the current vertex $u \in \beta_b$, we categorize $k$ blocks $V_i$ into two sets, $S_1$ and $S_2$, where $V_i \in S_1$ if a neighbor of $u$ was assigned to it, and $V_i \in S_2$ otherwise.
This allows us to determine the blocks $V_{max}$ and $V_{max}'$ that respectively maximize Equation (\ref{eq:fenk1}) and Equation (\ref{eq:fenk2}).
Then, the block that maximizes the generalized Fennel function is $\max(V_{max}, V_{max}')$ where 
\begin{equation}
	V_{max} = \argmax_{i \in S_1} \left\{\sum_{v \in V_i \cap N(u)}\omega(u,v) - c(u)f(c(V_i))\right\}
	\label{eq:fenk1}
\end{equation}

\begin{equation}
	V_{max}' = \argmax_{i \in S_2} \left\{- c(u)f(c(V_i))\right\}.
	\label{eq:fenk2}
\end{equation}

As $c(u)$ remains constant, determining the block $V_i \in S_2$ that maximizes Equation (\ref{eq:fenk2}) is equivalent to finding the block $V_i$ minimizing $f(c(V_i)) = \alpha * \gamma * c(V_i)^{\gamma - 1}$, specifically, the $V_i \in S_2$ with the lowest block weight $c(V_i)$. In our scheme, $c(V_i) = c(E_i)$, that is, the number of edges assigned to block $E_i$ of the overall edge partition.
This optimized process, facilitated by a priority queue, reduces the evaluation of blocks for each vertex to those assigned to its neighbors and the minimum weight block overall.
It results in an optimal block $V_i$ for maximizing the generalized Fennel gain function in $\mathcal{O}(d(u)+\log(k))$ time using a binary heap priority queue or $\mathcal{O}(d(u))$ with a bucket priority queue, as suggested in~\cite{freight_paper}, ultimately yielding an overall linear time complexity.
With this enhanced approach, the runtime becomes independent of the parameter $k$.

\subparagraph*{The Parameter $\boldsymbol{\alpha}$.}
\label{par:The Parameter Alpha}
The authors of \texttt{Fennel}~\cite{tsourakakis2014fennel}, define the parameter $\alpha = \sqrt{k}\frac{m}{n^{3/2}}$ for vertex partitioning of an input graph $G = (V, E)$ with $n$ vertices and $m$ edges.
However this choice of $\alpha$ is not directly applicable for edge partitioning in our CSPAC-based model.
If we built a single model $\beta_b$ for the whole graph $G$ at once, it would have $n^* = m$ vertices and a number $m^*$ of edges equal to the number of auxiliary edges in $\beta_b$.
While we know $n^*$ immediately, we cannot directly obtain $m^*$ without visiting all vertices of the whole graph.
Thus, we \hbox{need to estimate $m^*$}.

We thus have three distinct approaches for determining the Fennel parameter $\alpha$.
In the \emph{static} $\alpha$ method, we keep $\alpha$ constant across all batches, setting it to $\alpha = \sqrt{k}\frac{m_{approx}}{n^{3/2}}$, where $n^* = m/2$, $m_{approx} = y * n^*$, and $y$ is a tuning parameter.
For the \emph{batch} $\alpha$ approach, we update $\alpha$ for each batch, calculating it as $\alpha = \sqrt{k}\frac{m_{s}}{n_{s}^{3/2}}$, where $n_{s}$ and $m_s$ are the number of vertices and edges of each CSPAC graph $S^*_b$ respectively.
The \emph{dynamic} $\alpha$ method also updates $\alpha$ per batch. It begins with the \emph{static} $\alpha$ setting and refines $m_{approx}$ in each batch by computing the number of auxiliary edges, determined through counting \hbox{vertices of degree less than or equal to 2}.

\subsection{FreightE}
In addition to \texttt{HeiStreamE}, we present \texttt{FreightE}, a fast streaming edge partitioner that uses streaming hypergraph partitioning to assign blocks to edges on the fly. In general, a hypergraph vertex partitioner can partition edges of an input graph $G$ by first transforming it into its dual hypergraph representation $H$, where each edge of $G$ is a hypervertex, and each vertex of $G$ induces a hyperedge spanning its incident edges. Then, a hypergraph vertex partitioner that assigns the hypervertices of $H$ into $k$ blocks, while optimizing for the connectivity metric, directly provides an edge partition of $G$. The intuition behind this approach is that a hypergraph vertex partitioner that optimizes for the connectivity metric directly optimizes the replication factor of the underlying edge partition~\cite{more_recent_advances_hgp}. In \texttt{FreightE}, we perform the transformation of $G$ into $H$ on the fly as follows. At a given step, we read a vertex of the input graph (METIS format) along with its neighborhood. Each undirected edge in the neighborhood is treated as a unique hypervertex of the hypergraph, and permanently assigned to a block using the \texttt{FREIGHT} streaming hypergraph partitioner~\cite{freight_paper}. This process is repeated until all vertices along with their neighborhoods are visited, at which point each edge of the input graph is assigned to a block.  Following the \texttt{FREIGHT} partitioner, the overall runtime of \texttt{FreightE} is $\mathcal{O}(n+m)$, and memory complexity is $\mathcal{O}(m+k)$. In comparison to \texttt{HeiStreamE}, \texttt{FreightE} is faster, as it does not require the construction of an equivalent CSPAC graph, and requires less memory unless $m \gg n$.

\section{Experimental Evaluation}
\label{sec:Experimental Evaluation}

\subparagraph*{Setup.} 

We implemented \texttt{HeiStreamE} and \texttt{FreightE} inside the \texttt{KaHIP} framework (using C++) and compiled it using gcc 9.3 with full optimization enabled (-O3 flag).
Except for the largest graph instance \texttt{gsh-2015}, all experiments were performed on a single core of a machine consisting of a sixteen-core Intel Xeon Silver 4216 processor running at $2.1$ GHz, $100$ GB of main memory, $16$ MB of L2-Cache, and $22$ MB of L3-Cache running Ubuntu 20.04.1. %
To facilitate algorithms that required greater memory, \texttt{gsh-2015} was run on a single core of an alternate machine, consisting of a 64-core AMD EPYC 7702P Processor containing 1 TB of main memory. 

\subparagraph*{Baselines.}

We compare \texttt{HeiStreamE} and \texttt{FreightE} against the state-of-the-art streaming algorithm \texttt{HDRF}, as well as the re-streaming algorithms \texttt{2PS-HDRF}~\cite{2PS2020} and \texttt{2PS-L}~\cite{psl}, which require three passes over the input graph.
We exclude the following algorithms: \texttt{SNE}~\cite{zhang2017graph}, as it fails to execute for $k > 127$ and is outperformed by \texttt{2PS-HDRF}~\cite{2PS2020}; \texttt{DBH}~\cite{xie2014distributed}, as it ignores past partition assignments and thus has poor solution quality; \texttt{ADWISE} and \texttt{RBSEP} as they have limited global information during streaming, and \texttt{2PS-L} outperforms them~\cite{psl}.

For comparison with competitors, we obtained implementations of \texttt{2PS-HDRF} and \texttt{2PS-L} from their official repository, which also provides an implementation of \texttt{HDRF}.
We configure all competitor algorithms with the optimal settings provided by the authors.
\texttt{2PS-HDRF}, \texttt{2PS-L} and \texttt{HDRF} require a vertex-to-partition table of size $\mathcal{O}(n*k)$, which stores the blocks that each vertex was replicated on. To optimize memory usage, these partitioners can be built with the number of partitions at compile time to only allocate the required amount of memory. 
Thus, we re-compile them with the CMake flag for the number of blocks for each $k$ value in our experiments. We set the \texttt{HDRF} scoring function parameter $\lambda = 1.1$, similar to the authors of \texttt{2PS}~\cite{2PS2020}.
The provided codes for \texttt{2PS-HDRF}, \texttt{2PS-L}, and \texttt{HDRF} in their official repositories set a hard-coded soft limit on the number of partitions to $256$, which we override to test the algorithms with larger $k$ values.

All competitors 
read a binary edge list format with 32-bit vertex IDs.
This allows for faster IO during program execution.
Additionally, all competitor programs offer a converter which can load a graph in the standard edge list format, convert it into the binary edge list format, and write it to memory before proceeding.
For a fair comparison with  \texttt{HeiStreamE} and \texttt{FreightE}, which are capable of reading both text-based (METIS) and binary (adjacency) graphs, we perform experiments with binary graph formats only. Further, we exclude the IO time since the objective of the experiments is to measure the performance of the partitioners and not IO efficiency. Similarly, we exclude the time it takes to convert the graphs into the binary format for all~partitioners.  

\subparagraph*{Instances.}

Our graph instances for experiments are shown in Appendix Table \ref{table:graph} and are sourced from Ref.~\cite{benchmarksfornetworksanalysis,BMSB,BRSLLP,BoVWFI,12m,snap,nr-aaai15}.
All instances evaluated have been used for benchmarking in previous works on graph partitioning.
From these graph instances, we construct three disjoint sets: a tuning set for parameter study experiments, a test set for comparison against state-of-the-art and a set of huge graphs, for which in-memory partitioners ran out of memory on our machine.
We set the number of blocks to $k = \{2^1, 2^2, \dots, 2^{14}\}$ for all experiments except those on huge graphs, for which $k$ values are shown in Table~\ref{table:huge}.
We allow an imbalance of $\varepsilon = 3\%$ for all partitioners.
While streaming, we use the natural order of the vertices in these graphs.

\subparagraph*{Methodology.}

We measure running time, replication factor and memory consumption, i.e., the maximum resident set size for the executed process.
When averaging over all instances, we use the geometric mean to give every instance the same influence on the final score.
Further, we average all results of each algorithm grouped by $k$, to explore performance with increasing $k$ values.
Let the runtime, replication factor or memory consumption be denoted by the score $\sigma_{A}$ for some $k$ partition generated by an algorithm $A$.
We express this score relative to others using the following tools:
\emph{improvement} over an algorithm $B$, computed as a percentage $(\frac{\sigma_A}{\sigma_B} - 1) * 100 \%$ and
\emph{relative} value over an algorithm $B$, computed as $\frac{\sigma_A}{\sigma_B}$.
Additionally, we present performance profiles by Dolan and Mor{\'e}~\cite{pp} to benchmark our algorithms.
These profiles relate the running time (resp. solution quality, memory) of the slower (resp. worse) algorithms to the fastest (resp. best) one on a per-instance basis, rather than grouped by $k$.
Their $x$-axis shows a factor $\tau$ while their $y$-axis shows the percentage of instances for which an algorithm has up to $\tau$ times the running time (resp. solution quality, memory) of the fastest (resp. best) algorithm.

\subsection{Parameter Tuning}
\label{subsec:Tuning}
We tuned the parameters used by \texttt{HeiStreamE}, namely the initial partitioning approach, the selection of the batch graph model $\beta_b$ mode and the choice of Fennel $\alpha$, through experiments run on the Tuning Set (see Appendix Table \ref{table:graph}).
In each experiment we tuned a single parameter with all others constant. 
We ran all tuning experiments on \texttt{HeiStreamE}, with a buffer size of $\delta = \{32\,768, 131\,072, 262\,144\}$. In this section, we describe results with $\delta = 32\,768$ only, as we found that the choice of the best tuning parameters was independent of buffer size. Our results support the use of $k$-independent initial partitioning, the \emph{minimal} mode for $\beta_b$ and batch $\alpha$ for Fennel’s $\alpha$ parameter for subsequent experiments. Other parameters for the multilevel partitioning scheme, specifically the number of rounds of label propagation during coarsening and uncoarsening, and the size of the coarsest graph, align with optimal values from \texttt{HeiStream}~\cite{HeiStream}.

\subparagraph{k-Independent Initial Partitioning.} We evaluated the effect of using our enhanced initial partitioning approach described in Section \ref{par:Faster Initial Partitioning}.
In our implementation, we use a binary heap priority queue to perform increase-key operations in $\mathcal{O}(\log{k})$ time and obtain the block with the smallest weight in $\mathcal{O}(1)$ time.
As a baseline, we use the initial partitioning method in \texttt{HeiStream}~\cite{HeiStream}.
Our enhanced initial partitioning approach is, on average, $1.85\times$ faster than the baseline across all $k$ values, and $4.17\times$ faster for $k \geq 256$. 
The solution quality remains unchanged. 

\subparagraph{Graph Model Mode.} To choose a suitable per-batch graph model mode among the \emph{maximal}, \emph{r-Subset}, and \emph{minimal} modes, we ran comparisons against a baseline configuration using no mode, i.e., one without any artificial vertices or edges representing block assignment decisions.
The \emph{minimal} model mode produces a solution quality improvement of 11.73\% over the baseline of using no mode, which is comparable to the 12.8\% increase achieved by the \emph{maximal} mode.  
All modes have an increased runtime over the baseline. The \emph{minimal} mode is $1.11\times$ slower on average than no mode, but is $1.41\times$ and $1.29\times$ faster than the \emph{maximal} and \emph{r-Subset} mode respectively.
Besides offering a substantial increase in solution quality while being faster than other modes, the \emph{minimal} mode also has a much lower memory overhead compared to the \emph{r-Subset} and \emph{maximal} mode, as we store only one block per \hbox{vertex $v$} (instead of up to $\min\{k, d(v)\}$ blocks).

\subparagraph{Fennel Alpha.} We performed comparisons of the different choices of $\alpha$ among \emph{static}, \emph{batch}, and \emph{dynamic} $\alpha$.
We do not observe a significant difference in runtime between the choices for $\alpha$.
In terms of replication factor, \emph{batch} $\alpha$ provides the best solution quality for a majority of $k$ values, particularly for $k > 32$.
On average, across all instances and all $k$ values, \emph{batch} $\alpha$ produces $0.86\%$ and $3.27\%$ better solution quality than \emph{static} and \emph{dynamic} $\alpha$ respectively.
For $k > 32$, these averages increase to $2.1\%$ and $5.2\%$ respectively.

\subsection{Comparison with State-of-the-Art}
\label{subsec:State-of-the-Art}

We now provide experiments in which we compare \texttt{HeiStreamE} and \texttt{FreightE} against the current state-of-the-art algorithms for (re)streaming edge partitioning, namely, \texttt{HDRF}, \texttt{2PS-HDRF} and \texttt{2PS-L}.
These experiments were performed on the Test Set and the Huge Set of graphs in Appendix Table \ref{table:graph}.
Figure~\ref{fig:overall} gives performance profiles for the Test Set and Table~\ref{table:huge} gives detailed per instance results for instances of the Huge Set.
We distinguish between buffer sizes by defining, for example, \texttt{HeiStreamE(32x)} as \texttt{HeiStreamE} with a buffer size of $32\,768 = 32x$ vertices, where $x = 1\,024$. 

\begin{figure}[t]
	\centering
	
	\includegraphics{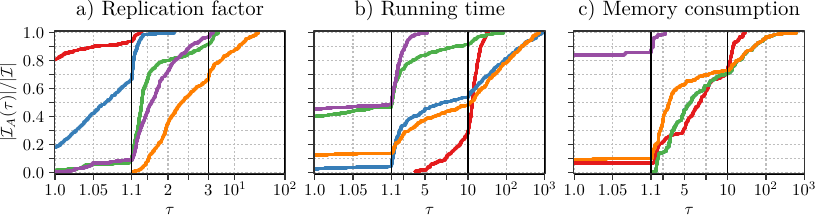}
	\includegraphics{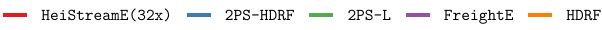}
	
	\caption{Comparison of \texttt{HeiStreamE} and \texttt{FreightE} with \texttt{2PS-HDRF}, \texttt{2PS-L} and \texttt{HDRF} on the Test Set in Appendix Table \ref{table:graph} using performance profiles.
		Let $\mathcal{A}$ be the set of all algorithms, $\mathcal{I}$ the set of instances, and $q_A(I)$ the quality of algorithm $A \in \mathcal{A}$ on instance $I \in \mathcal{I}$. For each algorithm $A$, we plot the fraction of instances $\frac{|\mathcal{I}_A(\tau)|}{|\mathcal{I}|}$ (y-axis) where $\mathcal{I}_A(\tau) := \set{I \in \mathcal{I} | q_A \leq \tau \cdot min_{A' \in \mathcal{A}}q_{A'}(I)}$ and $\tau$ is on the x-axis.
		Includes all $k$ values.
		Note the logarithmic scale in the final third of the plots.
		Memory consumption is measured as the maximum resident set size of the program execution.}
	\label{fig:overall}
\end{figure}

\begin{figure}[t]
	\centering
	
	\includegraphics{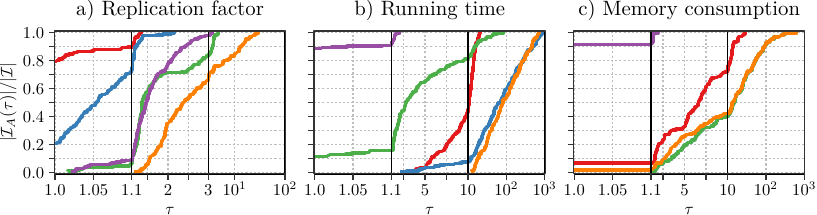}
	\includegraphics{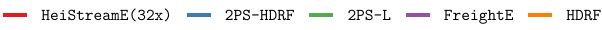}
	
	\caption{Comparison of \texttt{HeiStreamE} and \texttt{FreightE} with \texttt{2PS-HDRF}, \texttt{2PS-L} and \texttt{HDRF} on the Test Set in Appendix Table \ref{table:graph} using performance profiles (see Figure \ref{fig:overall} caption for detailed description). 
		Only includes runs with $k \ge 512$.
		Note the logarithmic scale in the final third of the plots.
		Memory consumption is measured as the maximum resident set size of the program execution.}
	\label{fig:overall_largek}
\end{figure}

\subparagraph{Replication Factor.} \texttt{HeiStreamE(32x)} produces better solution quality than \hbox{state-of-the-art} (re)streaming edge partitioners for all $k$ values. The results from our Test Set demonstrate that \texttt{HeiStreamE(32x)} achieves an average improvement in solution quality of 7.56\% (or 13.69\% when using \texttt{HeiStreamE(256x)}) compared to \texttt{2PS-HDRF}, which produces the next best solution quality. 
As displayed in Figure~\ref{fig:overall}a, \texttt{HeiStreamE(32x)} produces better solution quality than \texttt{2PS-HDRF} in approximately 80\% of all Test Set instances. 
The largest improvement in replication factor that we observed for \texttt{HeiStreamE(32x)} over \texttt{2PS-HDRF} is of $\approx 53\%$ on graph \texttt{circuit5M} for $k \geq 8192$.
Further, \texttt{HeiStreamE(32x)} achieves an average improvement in solution quality of 51.84\% percent over \texttt{2PS-L}, and an average improvement of 202.86\% over \texttt{HDRF}, the only other on the fly streaming algorithm presented here. These results from the Test Set are reflected in our experiments on huge graphs, shown in Table \ref{table:huge}: \texttt{HeiStreamE(32x)} and \texttt{HeiStreamE(256x)} produce the best solution quality in most instances, providing the best solution quality for all $k$ values for five of the six huge graphs. While \texttt{FreightE} produces lower solution quality than \texttt{2PS-L} and \texttt{2PS-HDRF} on most Test Set instances, for $k \geq 512$, \texttt{FreightE}, produces 9.88\% higher solution quality than \texttt{2PS-L} on average. \texttt{FreightE} produces 87.28\% better solution quality than \texttt{HDRF} on average across all Test Set instances and $k$ values. The same trends for \texttt{FreightE} are reflected in our results on huge graphs.

\subparagraph{Runtime.} \texttt{HeiStreamE(32x)} is on average slower than \texttt{2PS-HDRF} and \texttt{HDRF} for $k \leq 256$; however, since its runtime is not linearly dependent on $k$, \texttt{HeiStreamE} is substantially faster than \texttt{2PS-HDRF} and \texttt{HDRF} for higher $k$ values (Figure~\ref{fig:overall_largek}b). On the Test Set, \texttt{HeiStreamE(32x)} is on average $6.7\times$ faster than \texttt{2PS-HDRF} and $8.7\times$ faster than \texttt{HDRF} for $k \geq 512$. Compared to \texttt{2PS-L}, whose time complexity is also independent of $k$, \texttt{HeiStreamE(32x)} is on average slower across all instances. Similarly, in our experiments on huge graphs, \texttt{HeiStreamE(32x)} and \texttt{HeiStreamE(256x)} are faster than \texttt{2PS-HDRF} and \texttt{HDRF} for high $k$ values, but slower than \texttt{2PS-L}. On the Test Set, \texttt{FreightE} is the fastest algorithm among competitors;  it is $1.3\times$ faster than \texttt{2PS-L} which is the next fastest algorithm. Additionally, it is $9.9\times$ faster than \texttt{HDRF} and $8\times$ faster than \texttt{2PS-HDRF} on average across all Test Set instances. In our experiments on huge graphs, \texttt{FreightE} is faster than \texttt{HDRF} and \texttt{2PS-HDRF} on average across all instances, and the fastest algorithm for $k \geq 2\,048$. 

\subparagraph{Memory Consumption.} Since \texttt{HeiStreamE(32x)} uses a buffered streaming approach, it consumes on average more memory than \texttt{2PS-HDRF}, \texttt{2PS-L} and \texttt{HDRF} for $k \leq 256$ on the Test Set; however, since its memory consumption is not asymptotically dependent on \hbox{$m$ or $k$}, \texttt{HeiStreamE(32x)} consumes significantly less memory than \texttt{2PS-HDRF}, \texttt{2PS-L}, and \texttt{HDRF} for higher $k$ values. On average, for \hbox{$k \geq 512$}, \texttt{2PS-HDRF} and \texttt{2PS-L} use $3.0\times$ more memory than \texttt{HeiStreamE(32x)}, and \texttt{HDRF} uses $2.7\times$ more memory than \texttt{HeiStreamE(32x)}. Notably, while \texttt{HeiStreamE(32x)} consumes $1.46\times$ more memory on average than \texttt{2PS-HDRF} and \texttt{2PS-L} for $k \leq 256$ on the Test Set, in three out of six of the huge graphs, it is more memory efficient than them across all $k$ values. On the Test Set, \texttt{FreightE} consumes significantly less memory than all competitors as shown in Figure~\ref{fig:overall}c. It consumes $4.9\times$ less memory than \texttt{HDRF} and $7.2\times$ less memory than \texttt{2PS-L} and \texttt{2PS-HDRF} on average across all instances and all $k$ values. Further, on average, \texttt{FreightE} consumes $16\times$ less memory than \texttt{2PS-L} and \texttt{2PS-HDRF} for $k \geq 512$ (Figure~\ref{fig:overall_largek}c). Since \texttt{FreightE}'s memory consumption is linearly dependent on $m$, it is less memory efficient on the huge graphs, which, like other real-world graphs, tend to have many more edges than vertices. \texttt{FreightE} uses more memory than competitors for $k \leq 256$ in five out of six of the huge graphs, but is more memory efficient at high $k$ due to its memory being asymptotically independent of $k$. In our experiments on huge graphs, \texttt{2PS-HDRF}, \texttt{2PS-L} and \texttt{HDRF} exceed the available memory on the machine at $k \geq 4\,096$ for \hbox{uk-2007-05} and \hbox{webbase-2001}, and at $k \geq 8\,192$ for \hbox{com-friendster}. At $k \geq 16\,384$ they exceed the available memory for all graphs, while \texttt{HeiStreamE} and \texttt{FreightE}'s memory consumption is predictable and consistent across $k$.

\setlength\rotheadsize{1.75cm}
\begin{table}[H]
	
	\caption{\label{table:huge} Results of experiments on the Huge Set in Appendix Table \ref{table:graph}.
		Here, we compare \texttt{HeiStreamE(32x)}, i.e., \texttt{HeiStreamE} with a buffer size of $32 \cdot 1\,024$ (resp. \texttt{HeiStreamE(256x)}), and \texttt{FreightE} with state-of-the-art (re)streaming edge partitioners on huge graph instances, displaying the Replication Factor \emph{RF}, Running Time \emph{RT} [s] and Memory Consumption \emph{Mem} [GB].
		The best solution quality for each instance is \textbf{emboldened}. The missing values indicate when an algorithm exceeded the memory of the machine for all graphs except \texttt{gsh-2015}, where the missing value indicates a timeout, i.e., the algorithm exceeded a runtime limit of 24 hours.
	}
	\resizebox{\linewidth}{!}{%
		\begin{tabular}{lrrrrrrrrrrrrrrrrrrr}
			\multicolumn{1}{c}{$G$} & \multicolumn{1}{c}{$k$} & \multicolumn{3}{c}{\texttt{HeiStreamE(32x)}} & \multicolumn{3}{c}{\texttt{HeiStreamE(256x)}} & \multicolumn{3}{c}{\texttt{FreightE}} & \multicolumn{3}{c}{\texttt{2PS-HDRF}} & \multicolumn{3}{c}{\texttt{2PS-L}} & \multicolumn{3}{c}{\texttt{HDRF}} \\
			& \multicolumn{1}{l}{} & \multicolumn{1}{c}{RF} & \multicolumn{1}{c}{{\cellcolor[rgb]{0.937,0.937,0.937}}RT} & \multicolumn{1}{c}{{\cellcolor[rgb]{0.753,0.753,0.753}}Mem} & \multicolumn{1}{c}{RF} & \multicolumn{1}{c}{{\cellcolor[rgb]{0.937,0.937,0.937}}RT} & \multicolumn{1}{c}{{\cellcolor[rgb]{0.753,0.753,0.753}}Mem} & \multicolumn{1}{c}{RF} & \multicolumn{1}{c}{{\cellcolor[rgb]{0.937,0.937,0.937}}RT} & \multicolumn{1}{c}{{\cellcolor[rgb]{0.753,0.753,0.753}}Mem} & \multicolumn{1}{c}{RF} & \multicolumn{1}{c}{{\cellcolor[rgb]{0.937,0.937,0.937}}RT} & \multicolumn{1}{c}{{\cellcolor[rgb]{0.753,0.753,0.753}}Mem} & \multicolumn{1}{c}{RF} & \multicolumn{1}{c}{{\cellcolor[rgb]{0.937,0.937,0.937}}RT} & \multicolumn{1}{c}{{\cellcolor[rgb]{0.753,0.753,0.753}}Mem} & \multicolumn{1}{c}{RF} & \multicolumn{1}{c}{{\cellcolor[rgb]{0.937,0.937,0.937}}RT} & \multicolumn{1}{c}{{\cellcolor[rgb]{0.753,0.753,0.753}}Mem} \\
			\multicolumn{1}{c}{\multirow{10}{*}{\rotatebox{90}{\textbf{uk-2007-05}}}} & 4 & \textbf{1.05} & {\cellcolor[rgb]{0.937,0.937,0.937}}3385 & {\cellcolor[rgb]{0.753,0.753,0.753}}15.35 & \textbf{1.05} & {\cellcolor[rgb]{0.937,0.937,0.937}}3648 & {\cellcolor[rgb]{0.753,0.753,0.753}}18.59 & 1.76 & {\cellcolor[rgb]{0.937,0.937,0.937}}424 & {\cellcolor[rgb]{0.753,0.753,0.753}}12.72 & 1.12 & {\cellcolor[rgb]{0.937,0.937,0.937}}139 & {\cellcolor[rgb]{0.753,0.753,0.753}}5.92 & 1.40 & {\cellcolor[rgb]{0.937,0.937,0.937}}142 & {\cellcolor[rgb]{0.753,0.753,0.753}}5.92 & 2.74 & {\cellcolor[rgb]{0.937,0.937,0.937}}133 & {\cellcolor[rgb]{0.753,0.753,0.753}}2.35 \\
			\multicolumn{1}{c}{} & 32 & 1.05 & {\cellcolor[rgb]{0.937,0.937,0.937}}3707 & {\cellcolor[rgb]{0.753,0.753,0.753}}15.36 & \textbf{1.04} & {\cellcolor[rgb]{0.937,0.937,0.937}}3610 & {\cellcolor[rgb]{0.753,0.753,0.753}}18.54 & 3.03 & {\cellcolor[rgb]{0.937,0.937,0.937}}427 & {\cellcolor[rgb]{0.753,0.753,0.753}}12.72 & 1.16 & {\cellcolor[rgb]{0.937,0.937,0.937}}283 & {\cellcolor[rgb]{0.753,0.753,0.753}}5.92 & 2.05 & {\cellcolor[rgb]{0.937,0.937,0.937}}149 & {\cellcolor[rgb]{0.753,0.753,0.753}}5.92 & 8.30 & {\cellcolor[rgb]{0.937,0.937,0.937}}718 & {\cellcolor[rgb]{0.753,0.753,0.753}}2.35 \\
			\multicolumn{1}{c}{} & 128 & 1.07 & {\cellcolor[rgb]{0.937,0.937,0.937}}3411 & {\cellcolor[rgb]{0.753,0.753,0.753}}15.37 & \textbf{1.05} & {\cellcolor[rgb]{0.937,0.937,0.937}}3543 & {\cellcolor[rgb]{0.753,0.753,0.753}}18.54 & 3.59 & {\cellcolor[rgb]{0.937,0.937,0.937}}419 & {\cellcolor[rgb]{0.753,0.753,0.753}}12.72 & 1.22 & {\cellcolor[rgb]{0.937,0.937,0.937}}918 & {\cellcolor[rgb]{0.753,0.753,0.753}}6.70 & 2.64 & {\cellcolor[rgb]{0.937,0.937,0.937}}158 & {\cellcolor[rgb]{0.753,0.753,0.753}}6.70 & 12.45 & {\cellcolor[rgb]{0.937,0.937,0.937}}2589 & {\cellcolor[rgb]{0.753,0.753,0.753}}3.92 \\
			\multicolumn{1}{c}{} & 256 & 1.09 & {\cellcolor[rgb]{0.937,0.937,0.937}}3642 & {\cellcolor[rgb]{0.753,0.753,0.753}}15.37 & \textbf{1.06} & {\cellcolor[rgb]{0.937,0.937,0.937}}3684 & {\cellcolor[rgb]{0.753,0.753,0.753}}18.54 & 3.81 & {\cellcolor[rgb]{0.937,0.937,0.937}}398 & {\cellcolor[rgb]{0.753,0.753,0.753}}12.71 & 1.26 & {\cellcolor[rgb]{0.937,0.937,0.937}}1725 & {\cellcolor[rgb]{0.753,0.753,0.753}}8.27 & 3.06 & {\cellcolor[rgb]{0.937,0.937,0.937}}170 & {\cellcolor[rgb]{0.753,0.753,0.753}}8.27 & 14.76 & {\cellcolor[rgb]{0.937,0.937,0.937}}4682 & {\cellcolor[rgb]{0.753,0.753,0.753}}7.06 \\
			\multicolumn{1}{c}{} & 512 & 1.10 & {\cellcolor[rgb]{0.937,0.937,0.937}}3407 & {\cellcolor[rgb]{0.753,0.753,0.753}}16.56 & \textbf{1.06} & {\cellcolor[rgb]{0.937,0.937,0.937}}3718 & {\cellcolor[rgb]{0.753,0.753,0.753}}18.35 & 4.01 & {\cellcolor[rgb]{0.937,0.937,0.937}}373 & {\cellcolor[rgb]{0.753,0.753,0.753}}12.72 & 1.30 & {\cellcolor[rgb]{0.937,0.937,0.937}}3525 & {\cellcolor[rgb]{0.753,0.753,0.753}}14.50 & 3.60 & {\cellcolor[rgb]{0.937,0.937,0.937}}201 & {\cellcolor[rgb]{0.753,0.753,0.753}}14.50 & 17.08 & {\cellcolor[rgb]{0.937,0.937,0.937}}8714 & {\cellcolor[rgb]{0.753,0.753,0.753}}13.32 \\
			\multicolumn{1}{c}{} & 1\,024 & 1.13 & {\cellcolor[rgb]{0.937,0.937,0.937}}3421 & {\cellcolor[rgb]{0.753,0.753,0.753}}16.54 & \textbf{1.08} & {\cellcolor[rgb]{0.937,0.937,0.937}}3685 & {\cellcolor[rgb]{0.753,0.753,0.753}}18.54 & 4.18 & {\cellcolor[rgb]{0.937,0.937,0.937}}371 & {\cellcolor[rgb]{0.753,0.753,0.753}}12.72 & 1.38 & {\cellcolor[rgb]{0.937,0.937,0.937}}7505 & {\cellcolor[rgb]{0.753,0.753,0.753}}27.03 & 4.39 & {\cellcolor[rgb]{0.937,0.937,0.937}}279 & {\cellcolor[rgb]{0.753,0.753,0.753}}27.03 & 19.55 & {\cellcolor[rgb]{0.937,0.937,0.937}}16753 & {\cellcolor[rgb]{0.753,0.753,0.753}}25.86 \\
			\multicolumn{1}{c}{} & 2\,048 & 1.18 & {\cellcolor[rgb]{0.937,0.937,0.937}}3406 & {\cellcolor[rgb]{0.753,0.753,0.753}}15.39 & \textbf{1.11} & {\cellcolor[rgb]{0.937,0.937,0.937}}3551 & {\cellcolor[rgb]{0.753,0.753,0.753}}18.54 & 4.34 & {\cellcolor[rgb]{0.937,0.937,0.937}}371 & {\cellcolor[rgb]{0.753,0.753,0.753}}12.72 & 1.51 & {\cellcolor[rgb]{0.937,0.937,0.937}}17720 & {\cellcolor[rgb]{0.753,0.753,0.753}}52.10 & 5.58 & {\cellcolor[rgb]{0.937,0.937,0.937}}527 & {\cellcolor[rgb]{0.753,0.753,0.753}}52.10 & 22.10 & {\cellcolor[rgb]{0.937,0.937,0.937}}33302 & {\cellcolor[rgb]{0.753,0.753,0.753}}50.93 \\
			\multicolumn{1}{c}{} & 4\,096 & 1.24 & {\cellcolor[rgb]{0.937,0.937,0.937}}3419 & {\cellcolor[rgb]{0.753,0.753,0.753}}16.78 & \textbf{1.16} & {\cellcolor[rgb]{0.937,0.937,0.937}}3538 & {\cellcolor[rgb]{0.753,0.753,0.753}}18.54 & 4.53 & {\cellcolor[rgb]{0.937,0.937,0.937}}370 & {\cellcolor[rgb]{0.753,0.753,0.753}}12.73 & \multicolumn{1}{c}{-} & \multicolumn{1}{c}{{\cellcolor[rgb]{0.937,0.937,0.937}}-} & \multicolumn{1}{c}{{\cellcolor[rgb]{0.753,0.753,0.753}}-} & \multicolumn{1}{c}{-} & \multicolumn{1}{c}{{\cellcolor[rgb]{0.937,0.937,0.937}}-} & \multicolumn{1}{c}{{\cellcolor[rgb]{0.753,0.753,0.753}}-} & \multicolumn{1}{c}{-} & \multicolumn{1}{c}{{\cellcolor[rgb]{0.937,0.937,0.937}}-} & \multicolumn{1}{c}{{\cellcolor[rgb]{0.753,0.753,0.753}}-} \\
			\multicolumn{1}{c}{} & 8\,192 & 1.33 & {\cellcolor[rgb]{0.937,0.937,0.937}}3413 & {\cellcolor[rgb]{0.753,0.753,0.753}}16.94 & \textbf{1.23} & {\cellcolor[rgb]{0.937,0.937,0.937}}3580 & {\cellcolor[rgb]{0.753,0.753,0.753}}18.54 & 4.72 & {\cellcolor[rgb]{0.937,0.937,0.937}}384 & {\cellcolor[rgb]{0.753,0.753,0.753}}12.73 & \multicolumn{1}{c}{-} & \multicolumn{1}{c}{{\cellcolor[rgb]{0.937,0.937,0.937}}-} & \multicolumn{1}{c}{{\cellcolor[rgb]{0.753,0.753,0.753}}-} & \multicolumn{1}{c}{-} & \multicolumn{1}{c}{{\cellcolor[rgb]{0.937,0.937,0.937}}-} & \multicolumn{1}{c}{{\cellcolor[rgb]{0.753,0.753,0.753}}-} & \multicolumn{1}{c}{-} & \multicolumn{1}{c}{{\cellcolor[rgb]{0.937,0.937,0.937}}-} & \multicolumn{1}{c}{{\cellcolor[rgb]{0.753,0.753,0.753}}-} \\
			\multicolumn{1}{c}{} & 16\,384 & 1.50 & {\cellcolor[rgb]{0.937,0.937,0.937}}3464 & {\cellcolor[rgb]{0.753,0.753,0.753}}15.67 & \textbf{1.32} & {\cellcolor[rgb]{0.937,0.937,0.937}}3577 & {\cellcolor[rgb]{0.753,0.753,0.753}}18.54 & 4.97 & {\cellcolor[rgb]{0.937,0.937,0.937}}509 & {\cellcolor[rgb]{0.753,0.753,0.753}}12.73 & \multicolumn{1}{c}{-} & \multicolumn{1}{c}{{\cellcolor[rgb]{0.937,0.937,0.937}}-} & \multicolumn{1}{c}{{\cellcolor[rgb]{0.753,0.753,0.753}}-} & \multicolumn{1}{c}{-} & \multicolumn{1}{c}{{\cellcolor[rgb]{0.937,0.937,0.937}}-} & \multicolumn{1}{c}{{\cellcolor[rgb]{0.753,0.753,0.753}}-} & \multicolumn{1}{c}{-} & \multicolumn{1}{c}{{\cellcolor[rgb]{0.937,0.937,0.937}}-} & \multicolumn{1}{c}{{\cellcolor[rgb]{0.753,0.753,0.753}}-} \\
			\multicolumn{20}{l}{} \\
			\multicolumn{1}{c}{\multirow{10}{*}{\rotatebox{90}{\textbf{com-friendster}}}} & 4 & \textbf{1.62} & {\cellcolor[rgb]{0.937,0.937,0.937}}3356 & {\cellcolor[rgb]{0.753,0.753,0.753}}1.19 & 1.74 & {\cellcolor[rgb]{0.937,0.937,0.937}}7230 & \multicolumn{1}{c}{{\cellcolor[rgb]{0.753,0.753,0.753}}7.08} & 2.83 & {\cellcolor[rgb]{0.937,0.937,0.937}}357 & {\cellcolor[rgb]{0.753,0.753,0.753}}6.98 & 1.88 & {\cellcolor[rgb]{0.937,0.937,0.937}}423 & {\cellcolor[rgb]{0.753,0.753,0.753}}3.72 & 2.08 & {\cellcolor[rgb]{0.937,0.937,0.937}}394 & {\cellcolor[rgb]{0.753,0.753,0.753}}3.72 & 2.35 & {\cellcolor[rgb]{0.937,0.937,0.937}}251 & {\cellcolor[rgb]{0.753,0.753,0.753}}1.47 \\
			& 32 & \textbf{4.90} & {\cellcolor[rgb]{0.937,0.937,0.937}}3410 & {\cellcolor[rgb]{0.753,0.753,0.753}}1.22 & 4.94 & {\cellcolor[rgb]{0.937,0.937,0.937}}7159 & \multicolumn{1}{c}{{\cellcolor[rgb]{0.753,0.753,0.753}}7.33} & 10.71 & {\cellcolor[rgb]{0.937,0.937,0.937}}314 & {\cellcolor[rgb]{0.753,0.753,0.753}}6.98 & 5.21 & {\cellcolor[rgb]{0.937,0.937,0.937}}865 & {\cellcolor[rgb]{0.753,0.753,0.753}}3.72 & 6.99 & {\cellcolor[rgb]{0.937,0.937,0.937}}422 & {\cellcolor[rgb]{0.753,0.753,0.753}}3.72 & 7.97 & {\cellcolor[rgb]{0.937,0.937,0.937}}561 & {\cellcolor[rgb]{0.753,0.753,0.753}}1.47 \\
			& 128 & 10.00 & {\cellcolor[rgb]{0.937,0.937,0.937}}3516 & {\cellcolor[rgb]{0.753,0.753,0.753}}1.23 & 9.53 & {\cellcolor[rgb]{0.937,0.937,0.937}}7370 & \multicolumn{1}{c}{{\cellcolor[rgb]{0.753,0.753,0.753}}7.60} & 18.47 & {\cellcolor[rgb]{0.937,0.937,0.937}}311 & {\cellcolor[rgb]{0.753,0.753,0.753}}6.98 & \textbf{9.00} & {\cellcolor[rgb]{0.937,0.937,0.937}}2159 & {\cellcolor[rgb]{0.753,0.753,0.753}}4.21 & 12.17 & {\cellcolor[rgb]{0.937,0.937,0.937}}526 & {\cellcolor[rgb]{0.753,0.753,0.753}}4.21 & 14.53 & {\cellcolor[rgb]{0.937,0.937,0.937}}1523 & {\cellcolor[rgb]{0.753,0.753,0.753}}2.45 \\
			& 256 & 13.10 & {\cellcolor[rgb]{0.937,0.937,0.937}}3585 & {\cellcolor[rgb]{0.753,0.753,0.753}}1.26 & \textbf{10.87} & {\cellcolor[rgb]{0.937,0.937,0.937}}7530 & \multicolumn{1}{c}{{\cellcolor[rgb]{0.753,0.753,0.753}}7.60} & 19.79 & {\cellcolor[rgb]{0.937,0.937,0.937}}308 & {\cellcolor[rgb]{0.753,0.753,0.753}}6.98 & 11.12 & {\cellcolor[rgb]{0.937,0.937,0.937}}3633 & {\cellcolor[rgb]{0.753,0.753,0.753}}5.18 & 15.05 & {\cellcolor[rgb]{0.937,0.937,0.937}}557 & {\cellcolor[rgb]{0.753,0.753,0.753}}5.18 & 18.00 & {\cellcolor[rgb]{0.937,0.937,0.937}}2611 & {\cellcolor[rgb]{0.753,0.753,0.753}}4.40 \\
			& 512 & 15.92 & {\cellcolor[rgb]{0.937,0.937,0.937}}3681 & {\cellcolor[rgb]{0.753,0.753,0.753}}1.31 & \textbf{11.98} & {\cellcolor[rgb]{0.937,0.937,0.937}}7688 & \multicolumn{1}{c}{{\cellcolor[rgb]{0.753,0.753,0.753}}7.59} & 22.68 & {\cellcolor[rgb]{0.937,0.937,0.937}}306 & {\cellcolor[rgb]{0.753,0.753,0.753}}6.98 & 13.19 & {\cellcolor[rgb]{0.937,0.937,0.937}}6513 & {\cellcolor[rgb]{0.753,0.753,0.753}}9.05 & 17.69 & {\cellcolor[rgb]{0.937,0.937,0.937}}615 & {\cellcolor[rgb]{0.753,0.753,0.753}}9.05 & 21.16 & {\cellcolor[rgb]{0.937,0.937,0.937}}4743 & {\cellcolor[rgb]{0.753,0.753,0.753}}8.31 \\
			& 1\,024 & 18.05 & {\cellcolor[rgb]{0.937,0.937,0.937}}3742 & {\cellcolor[rgb]{0.753,0.753,0.753}}1.31 & \textbf{12.90} & {\cellcolor[rgb]{0.937,0.937,0.937}}7828 & \multicolumn{1}{c}{{\cellcolor[rgb]{0.753,0.753,0.753}}7.76} & 24.89 & {\cellcolor[rgb]{0.937,0.937,0.937}}312 & {\cellcolor[rgb]{0.753,0.753,0.753}}6.98 & 15.16 & {\cellcolor[rgb]{0.937,0.937,0.937}}12226 & {\cellcolor[rgb]{0.753,0.753,0.753}}16.87 & 20.03 & {\cellcolor[rgb]{0.937,0.937,0.937}}702 & {\cellcolor[rgb]{0.753,0.753,0.753}}16.87 & 23.78 & {\cellcolor[rgb]{0.937,0.937,0.937}}9173 & {\cellcolor[rgb]{0.753,0.753,0.753}}16.14 \\
			& 2\,048 & 19.00 & {\cellcolor[rgb]{0.937,0.937,0.937}}3916 & {\cellcolor[rgb]{0.753,0.753,0.753}}1.30 & \textbf{13.91} & {\cellcolor[rgb]{0.937,0.937,0.937}}7929 & \multicolumn{1}{c}{{\cellcolor[rgb]{0.753,0.753,0.753}}7.73} & 25.97 & {\cellcolor[rgb]{0.937,0.937,0.937}}339 & {\cellcolor[rgb]{0.753,0.753,0.753}}6.98 & 16.95 & {\cellcolor[rgb]{0.937,0.937,0.937}}23536 & {\cellcolor[rgb]{0.753,0.753,0.753}}32.51 & 21.87 & {\cellcolor[rgb]{0.937,0.937,0.937}}858 & {\cellcolor[rgb]{0.753,0.753,0.753}}32.51 & 25.76 & {\cellcolor[rgb]{0.937,0.937,0.937}}17678 & {\cellcolor[rgb]{0.753,0.753,0.753}}31.78 \\
			& 4\,096 & 19.77 & {\cellcolor[rgb]{0.937,0.937,0.937}}4024 & {\cellcolor[rgb]{0.753,0.753,0.753}}1.35 & \textbf{15.35} & {\cellcolor[rgb]{0.937,0.937,0.937}}8243 & \multicolumn{1}{c}{{\cellcolor[rgb]{0.753,0.753,0.753}}7.84} & 26.56 & {\cellcolor[rgb]{0.937,0.937,0.937}}358 & {\cellcolor[rgb]{0.753,0.753,0.753}}6.98 & 18.52 & {\cellcolor[rgb]{0.937,0.937,0.937}}45773 & {\cellcolor[rgb]{0.753,0.753,0.753}}63.80 & 23.27 & {\cellcolor[rgb]{0.937,0.937,0.937}}1161 & {\cellcolor[rgb]{0.753,0.753,0.753}}63.80 & 27.15 & {\cellcolor[rgb]{0.937,0.937,0.937}}33753 & {\cellcolor[rgb]{0.753,0.753,0.753}}63.06 \\
			& 8\,192 & 20.26 & {\cellcolor[rgb]{0.937,0.937,0.937}}4154 & {\cellcolor[rgb]{0.753,0.753,0.753}}1.36 & \textbf{16.90} & {\cellcolor[rgb]{0.937,0.937,0.937}}8325 & \multicolumn{1}{c}{{\cellcolor[rgb]{0.753,0.753,0.753}}8.10} & 26.86 & {\cellcolor[rgb]{0.937,0.937,0.937}}404 & {\cellcolor[rgb]{0.753,0.753,0.753}}6.98 & \multicolumn{1}{c}{-} & \multicolumn{1}{c}{{\cellcolor[rgb]{0.937,0.937,0.937}}-} & \multicolumn{1}{c}{{\cellcolor[rgb]{0.753,0.753,0.753}}-} & \multicolumn{1}{c}{-} & \multicolumn{1}{c}{{\cellcolor[rgb]{0.937,0.937,0.937}}-} & \multicolumn{1}{c}{{\cellcolor[rgb]{0.753,0.753,0.753}}-} & \multicolumn{1}{c}{-} & \multicolumn{1}{c}{{\cellcolor[rgb]{0.937,0.937,0.937}}-} & \multicolumn{1}{c}{{\cellcolor[rgb]{0.753,0.753,0.753}}-} \\
			& 16\,384 & 21.10 & {\cellcolor[rgb]{0.937,0.937,0.937}}4339 & {\cellcolor[rgb]{0.753,0.753,0.753}}1.33 & \textbf{19.16} & {\cellcolor[rgb]{0.937,0.937,0.937}}8782 & \multicolumn{1}{c}{{\cellcolor[rgb]{0.753,0.753,0.753}}8.55} & 26.90 & {\cellcolor[rgb]{0.937,0.937,0.937}}479 & {\cellcolor[rgb]{0.753,0.753,0.753}}6.98 & \multicolumn{1}{c}{-} & \multicolumn{1}{c}{{\cellcolor[rgb]{0.937,0.937,0.937}}-} & \multicolumn{1}{c}{{\cellcolor[rgb]{0.753,0.753,0.753}}-} & \multicolumn{1}{c}{-} & \multicolumn{1}{c}{{\cellcolor[rgb]{0.937,0.937,0.937}}-} & \multicolumn{1}{c}{{\cellcolor[rgb]{0.753,0.753,0.753}}-} & \multicolumn{1}{c}{-} & \multicolumn{1}{c}{{\cellcolor[rgb]{0.937,0.937,0.937}}-} & \multicolumn{1}{c}{{\cellcolor[rgb]{0.753,0.753,0.753}}-} \\
			\multicolumn{20}{l}{} \\
			\multicolumn{1}{c}{\multirow{10}{*}{\rotatebox{90}{\textbf{it-2004}}}} & 4 & \textbf{1.10} & {\cellcolor[rgb]{0.937,0.937,0.937}}999 & {\cellcolor[rgb]{0.753,0.753,0.753}}11.65 & \textbf{1.10} & {\cellcolor[rgb]{0.937,0.937,0.937}}1019 & {\cellcolor[rgb]{0.753,0.753,0.753}}17.83 & 1.73 & {\cellcolor[rgb]{0.937,0.937,0.937}}131 & {\cellcolor[rgb]{0.753,0.753,0.753}}4.03 & 1.15 & {\cellcolor[rgb]{0.937,0.937,0.937}}46 & {\cellcolor[rgb]{0.753,0.753,0.753}}2.54 & 1.45 & {\cellcolor[rgb]{0.937,0.937,0.937}}47 & {\cellcolor[rgb]{0.753,0.753,0.753}}2.54 & 2.65 & {\cellcolor[rgb]{0.937,0.937,0.937}}43 & {\cellcolor[rgb]{0.753,0.753,0.753}}0.93 \\
			& 32 & \textbf{1.09} & {\cellcolor[rgb]{0.937,0.937,0.937}}1000 & {\cellcolor[rgb]{0.753,0.753,0.753}}11.23 & \textbf{1.09} & {\cellcolor[rgb]{0.937,0.937,0.937}}1021 & {\cellcolor[rgb]{0.753,0.753,0.753}}17.70 & 2.74 & {\cellcolor[rgb]{0.937,0.937,0.937}}116 & {\cellcolor[rgb]{0.753,0.753,0.753}}4.03 & 1.21 & {\cellcolor[rgb]{0.937,0.937,0.937}}91 & {\cellcolor[rgb]{0.753,0.753,0.753}}2.54 & 2.12 & {\cellcolor[rgb]{0.937,0.937,0.937}}49 & {\cellcolor[rgb]{0.753,0.753,0.753}}2.54 & 7.75 & {\cellcolor[rgb]{0.937,0.937,0.937}}229 & {\cellcolor[rgb]{0.753,0.753,0.753}}0.93 \\
			& 128 & 1.11 & {\cellcolor[rgb]{0.937,0.937,0.937}}1006 & {\cellcolor[rgb]{0.753,0.753,0.753}}11.22 & \textbf{1.09} & {\cellcolor[rgb]{0.937,0.937,0.937}}1021 & {\cellcolor[rgb]{0.753,0.753,0.753}}18.79 & 3.20 & {\cellcolor[rgb]{0.937,0.937,0.937}}116 & {\cellcolor[rgb]{0.753,0.753,0.753}}4.03 & 1.27 & {\cellcolor[rgb]{0.937,0.937,0.937}}271 & {\cellcolor[rgb]{0.753,0.753,0.753}}2.85 & 2.40 & {\cellcolor[rgb]{0.937,0.937,0.937}}52 & {\cellcolor[rgb]{0.753,0.753,0.753}}2.85 & 11.70 & {\cellcolor[rgb]{0.937,0.937,0.937}}787 & {\cellcolor[rgb]{0.753,0.753,0.753}}1.54 \\
			& 256 & 1.12 & {\cellcolor[rgb]{0.937,0.937,0.937}}1001 & {\cellcolor[rgb]{0.753,0.753,0.753}}11.22 & \textbf{1.09} & {\cellcolor[rgb]{0.937,0.937,0.937}}1027 & {\cellcolor[rgb]{0.753,0.753,0.753}}18.01 & 3.38 & {\cellcolor[rgb]{0.937,0.937,0.937}}117 & {\cellcolor[rgb]{0.753,0.753,0.753}}4.03 & 1.31 & {\cellcolor[rgb]{0.937,0.937,0.937}}514 & {\cellcolor[rgb]{0.753,0.753,0.753}}3.47 & 2.52 & {\cellcolor[rgb]{0.937,0.937,0.937}}55 & {\cellcolor[rgb]{0.753,0.753,0.753}}3.47 & 13.85 & {\cellcolor[rgb]{0.937,0.937,0.937}}1425 & {\cellcolor[rgb]{0.753,0.753,0.753}}2.77 \\
			& 512 & 1.14 & {\cellcolor[rgb]{0.937,0.937,0.937}}1008 & {\cellcolor[rgb]{0.753,0.753,0.753}}11.85 & \textbf{1.10} & {\cellcolor[rgb]{0.937,0.937,0.937}}1007 & {\cellcolor[rgb]{0.753,0.753,0.753}}17.90 & 3.53 & {\cellcolor[rgb]{0.937,0.937,0.937}}117 & {\cellcolor[rgb]{0.753,0.753,0.753}}4.03 & 1.35 & {\cellcolor[rgb]{0.937,0.937,0.937}}1033 & {\cellcolor[rgb]{0.753,0.753,0.753}}5.70 & 2.98 & {\cellcolor[rgb]{0.937,0.937,0.937}}66 & {\cellcolor[rgb]{0.753,0.753,0.753}}5.70 & 16.15 & {\cellcolor[rgb]{0.937,0.937,0.937}}2684 & {\cellcolor[rgb]{0.753,0.753,0.753}}5.23 \\
			& 1\,024 & 1.17 & {\cellcolor[rgb]{0.937,0.937,0.937}}1002 & {\cellcolor[rgb]{0.753,0.753,0.753}}11.68 & \textbf{1.12} & {\cellcolor[rgb]{0.937,0.937,0.937}}1013 & {\cellcolor[rgb]{0.753,0.753,0.753}}17.57 & 3.67 & {\cellcolor[rgb]{0.937,0.937,0.937}}117 & {\cellcolor[rgb]{0.753,0.753,0.753}}4.03 & 1.39 & {\cellcolor[rgb]{0.937,0.937,0.937}}2132 & {\cellcolor[rgb]{0.753,0.753,0.753}}10.62 & 3.46 & {\cellcolor[rgb]{0.937,0.937,0.937}}92 & {\cellcolor[rgb]{0.753,0.753,0.753}}10.62 & 18.51 & {\cellcolor[rgb]{0.937,0.937,0.937}}5237 & {\cellcolor[rgb]{0.753,0.753,0.753}}10.16 \\
			& 2\,048 & 1.21 & {\cellcolor[rgb]{0.937,0.937,0.937}}1000 & {\cellcolor[rgb]{0.753,0.753,0.753}}11.23 & \textbf{1.16} & {\cellcolor[rgb]{0.937,0.937,0.937}}999 & {\cellcolor[rgb]{0.753,0.753,0.753}}17.86 & 3.78 & {\cellcolor[rgb]{0.937,0.937,0.937}}116 & {\cellcolor[rgb]{0.753,0.753,0.753}}4.03 & 1.51 & {\cellcolor[rgb]{0.937,0.937,0.937}}5127 & {\cellcolor[rgb]{0.753,0.753,0.753}}20.46 & 4.33 & {\cellcolor[rgb]{0.937,0.937,0.937}}157 & {\cellcolor[rgb]{0.753,0.753,0.753}}20.46 & 20.82 & {\cellcolor[rgb]{0.937,0.937,0.937}}10452 & {\cellcolor[rgb]{0.753,0.753,0.753}}20.00 \\
			& 4\,096 & 1.27 & {\cellcolor[rgb]{0.937,0.937,0.937}}1013 & {\cellcolor[rgb]{0.753,0.753,0.753}}11.87 & \textbf{1.21} & {\cellcolor[rgb]{0.937,0.937,0.937}}1031 & {\cellcolor[rgb]{0.753,0.753,0.753}}19.47 & 3.89 & {\cellcolor[rgb]{0.937,0.937,0.937}}118 & {\cellcolor[rgb]{0.753,0.753,0.753}}4.03 & 1.68 & {\cellcolor[rgb]{0.937,0.937,0.937}}13010 & {\cellcolor[rgb]{0.753,0.753,0.753}}40.15 & 5.89 & {\cellcolor[rgb]{0.937,0.937,0.937}}356 & {\cellcolor[rgb]{0.753,0.753,0.753}}40.15 & 22.94 & {\cellcolor[rgb]{0.937,0.937,0.937}}20717 & {\cellcolor[rgb]{0.753,0.753,0.753}}39.69 \\
			& 8\,192 & 1.38 & {\cellcolor[rgb]{0.937,0.937,0.937}}1020 & {\cellcolor[rgb]{0.753,0.753,0.753}}11.88 & \textbf{1.30} & {\cellcolor[rgb]{0.937,0.937,0.937}}1033 & {\cellcolor[rgb]{0.753,0.753,0.753}}18.08 & 4.02 & {\cellcolor[rgb]{0.937,0.937,0.937}}116 & {\cellcolor[rgb]{0.753,0.753,0.753}}4.03 & 1.90 & {\cellcolor[rgb]{0.937,0.937,0.937}}31776 & {\cellcolor[rgb]{0.753,0.753,0.753}}79.53 & 7.44 & {\cellcolor[rgb]{0.937,0.937,0.937}}817 & {\cellcolor[rgb]{0.753,0.753,0.753}}79.53 & 24.80 & {\cellcolor[rgb]{0.937,0.937,0.937}}40267 & {\cellcolor[rgb]{0.753,0.753,0.753}}79.07 \\
			& 16\,384 & 1.53 & {\cellcolor[rgb]{0.937,0.937,0.937}}1032 & {\cellcolor[rgb]{0.753,0.753,0.753}}11.99 & \textbf{1.40} & {\cellcolor[rgb]{0.937,0.937,0.937}}1050 & {\cellcolor[rgb]{0.753,0.753,0.753}}18.09 & 4.18 & {\cellcolor[rgb]{0.937,0.937,0.937}}151 & {\cellcolor[rgb]{0.753,0.753,0.753}}4.04 & \multicolumn{1}{c}{-} & \multicolumn{1}{c}{{\cellcolor[rgb]{0.937,0.937,0.937}}-} & \multicolumn{1}{c}{{\cellcolor[rgb]{0.753,0.753,0.753}}-} & \multicolumn{1}{c}{-} & \multicolumn{1}{c}{{\cellcolor[rgb]{0.937,0.937,0.937}}-} & \multicolumn{1}{c}{{\cellcolor[rgb]{0.753,0.753,0.753}}-} & \multicolumn{1}{c}{-} & \multicolumn{1}{c}{{\cellcolor[rgb]{0.937,0.937,0.937}}-} & \multicolumn{1}{c}{{\cellcolor[rgb]{0.753,0.753,0.753}}-} \\
			\multicolumn{20}{l}{} \\
			\multicolumn{1}{c}{\multirow{10}{*}{\rotatebox{90}{\textbf{sk-2005}}}} & 4 & \textbf{1.14} & {\cellcolor[rgb]{0.937,0.937,0.937}}2041 & {\cellcolor[rgb]{0.753,0.753,0.753}}12.97 & 1.16 & {\cellcolor[rgb]{0.937,0.937,0.937}}2126 & {\cellcolor[rgb]{0.753,0.753,0.753}}18.31 & 2.18 & {\cellcolor[rgb]{0.937,0.937,0.937}}232 & {\cellcolor[rgb]{0.753,0.753,0.753}}7.17 & 1.17 & {\cellcolor[rgb]{0.937,0.937,0.937}}85 & {\cellcolor[rgb]{0.753,0.753,0.753}}2.94 & 1.30 & {\cellcolor[rgb]{0.937,0.937,0.937}}88 & {\cellcolor[rgb]{0.753,0.753,0.753}}2.94 & 2.93 & {\cellcolor[rgb]{0.937,0.937,0.937}}73 & {\cellcolor[rgb]{0.753,0.753,0.753}}1.14 \\
			& 32 & 1.25 & {\cellcolor[rgb]{0.937,0.937,0.937}}2036 & {\cellcolor[rgb]{0.753,0.753,0.753}}13.25 & \textbf{1.18} & {\cellcolor[rgb]{0.937,0.937,0.937}}2143 & {\cellcolor[rgb]{0.753,0.753,0.753}}18.32 & 4.11 & {\cellcolor[rgb]{0.937,0.937,0.937}}236 & {\cellcolor[rgb]{0.753,0.753,0.753}}7.17 & 1.25 & {\cellcolor[rgb]{0.937,0.937,0.937}}194 & {\cellcolor[rgb]{0.753,0.753,0.753}}2.94 & 3.06 & {\cellcolor[rgb]{0.937,0.937,0.937}}91 & {\cellcolor[rgb]{0.753,0.753,0.753}}2.94 & 9.78 & {\cellcolor[rgb]{0.937,0.937,0.937}}391 & {\cellcolor[rgb]{0.753,0.753,0.753}}1.14 \\
			& 128 & 1.31 & {\cellcolor[rgb]{0.937,0.937,0.937}}2028 & {\cellcolor[rgb]{0.753,0.753,0.753}}13.02 & \textbf{1.19} & {\cellcolor[rgb]{0.937,0.937,0.937}}2147 & {\cellcolor[rgb]{0.753,0.753,0.753}}19.27 & 4.90 & {\cellcolor[rgb]{0.937,0.937,0.937}}202 & {\cellcolor[rgb]{0.753,0.753,0.753}}7.17 & 1.33 & {\cellcolor[rgb]{0.937,0.937,0.937}}630 & {\cellcolor[rgb]{0.753,0.753,0.753}}3.31 & 4.67 & {\cellcolor[rgb]{0.937,0.937,0.937}}102 & {\cellcolor[rgb]{0.753,0.753,0.753}}3.31 & 14.13 & {\cellcolor[rgb]{0.937,0.937,0.937}}1459 & {\cellcolor[rgb]{0.753,0.753,0.753}}1.89 \\
			& 256 & 1.39 & {\cellcolor[rgb]{0.937,0.937,0.937}}2047 & {\cellcolor[rgb]{0.753,0.753,0.753}}13.10 & \textbf{1.22} & {\cellcolor[rgb]{0.937,0.937,0.937}}2158 & {\cellcolor[rgb]{0.753,0.753,0.753}}19.14 & 5.16 & {\cellcolor[rgb]{0.937,0.937,0.937}}203 & {\cellcolor[rgb]{0.753,0.753,0.753}}7.17 & 1.38 & {\cellcolor[rgb]{0.937,0.937,0.937}}1254 & {\cellcolor[rgb]{0.753,0.753,0.753}}4.07 & 5.65 & {\cellcolor[rgb]{0.937,0.937,0.937}}120 & {\cellcolor[rgb]{0.753,0.753,0.753}}4.07 & 16.20 & {\cellcolor[rgb]{0.937,0.937,0.937}}2636 & {\cellcolor[rgb]{0.753,0.753,0.753}}3.40 \\
			& 512 & 1.46 & {\cellcolor[rgb]{0.937,0.937,0.937}}2052 & {\cellcolor[rgb]{0.753,0.753,0.753}}13.09 & \textbf{1.27} & {\cellcolor[rgb]{0.937,0.937,0.937}}2140 & {\cellcolor[rgb]{0.753,0.753,0.753}}18.35 & 5.42 & {\cellcolor[rgb]{0.937,0.937,0.937}}203 & {\cellcolor[rgb]{0.753,0.753,0.753}}7.17 & 1.51 & {\cellcolor[rgb]{0.937,0.937,0.937}}2743 & {\cellcolor[rgb]{0.753,0.753,0.753}}6.98 & 7.04 & {\cellcolor[rgb]{0.937,0.937,0.937}}159 & {\cellcolor[rgb]{0.753,0.753,0.753}}6.98 & 18.40 & {\cellcolor[rgb]{0.937,0.937,0.937}}4836 & {\cellcolor[rgb]{0.753,0.753,0.753}}6.42 \\
			& 1\,024 & 1.55 & {\cellcolor[rgb]{0.937,0.937,0.937}}2035 & {\cellcolor[rgb]{0.753,0.753,0.753}}13.35 & \textbf{1.33} & {\cellcolor[rgb]{0.937,0.937,0.937}}2153 & {\cellcolor[rgb]{0.753,0.753,0.753}}18.35 & 5.64 & {\cellcolor[rgb]{0.937,0.937,0.937}}203 & {\cellcolor[rgb]{0.753,0.753,0.753}}7.17 & 1.72 & {\cellcolor[rgb]{0.937,0.937,0.937}}6358 & {\cellcolor[rgb]{0.753,0.753,0.753}}13.02 & 8.93 & {\cellcolor[rgb]{0.937,0.937,0.937}}261 & {\cellcolor[rgb]{0.753,0.753,0.753}}13.02 & 20.69 & {\cellcolor[rgb]{0.937,0.937,0.937}}9226 & {\cellcolor[rgb]{0.753,0.753,0.753}}12.45 \\
			& 2\,048 & 1.63 & {\cellcolor[rgb]{0.937,0.937,0.937}}2064 & {\cellcolor[rgb]{0.753,0.753,0.753}}13.15 & \textbf{1.42} & {\cellcolor[rgb]{0.937,0.937,0.937}}2145 & {\cellcolor[rgb]{0.753,0.753,0.753}}18.43 & 5.89 & {\cellcolor[rgb]{0.937,0.937,0.937}}205 & {\cellcolor[rgb]{0.753,0.753,0.753}}7.17 & 1.99 & {\cellcolor[rgb]{0.937,0.937,0.937}}14369 & {\cellcolor[rgb]{0.753,0.753,0.753}}25.09 & 9.79 & {\cellcolor[rgb]{0.937,0.937,0.937}}452 & {\cellcolor[rgb]{0.753,0.753,0.753}}25.09 & 23.01 & {\cellcolor[rgb]{0.937,0.937,0.937}}18224 & {\cellcolor[rgb]{0.753,0.753,0.753}}24.53 \\
			& 4\,096 & 1.75 & {\cellcolor[rgb]{0.937,0.937,0.937}}2076 & {\cellcolor[rgb]{0.753,0.753,0.753}}13.27 & \textbf{1.51} & {\cellcolor[rgb]{0.937,0.937,0.937}}2168 & {\cellcolor[rgb]{0.753,0.753,0.753}}18.37 & 6.22 & {\cellcolor[rgb]{0.937,0.937,0.937}}207 & {\cellcolor[rgb]{0.753,0.753,0.753}}7.17 & 2.31 & {\cellcolor[rgb]{0.937,0.937,0.937}}31775 & {\cellcolor[rgb]{0.753,0.753,0.753}}49.24 & 11.27 & {\cellcolor[rgb]{0.937,0.937,0.937}}997 & {\cellcolor[rgb]{0.753,0.753,0.753}}49.24 & 25.39 & {\cellcolor[rgb]{0.937,0.937,0.937}}36426 & {\cellcolor[rgb]{0.753,0.753,0.753}}48.67 \\
			& 8\,192 & 1.86 & {\cellcolor[rgb]{0.937,0.937,0.937}}2095 & {\cellcolor[rgb]{0.753,0.753,0.753}}13.56 & \textbf{1.62} & {\cellcolor[rgb]{0.937,0.937,0.937}}2159 & {\cellcolor[rgb]{0.753,0.753,0.753}}18.38 & 6.64 & {\cellcolor[rgb]{0.937,0.937,0.937}}211 & {\cellcolor[rgb]{0.753,0.753,0.753}}7.17 & 2.67 & {\cellcolor[rgb]{0.937,0.937,0.937}}70438 & {\cellcolor[rgb]{0.753,0.753,0.753}}91.45 & 12.34 & {\cellcolor[rgb]{0.937,0.937,0.937}}2051 & {\cellcolor[rgb]{0.753,0.753,0.753}}91.46 & 27.95 & {\cellcolor[rgb]{0.937,0.937,0.937}}72800 & {\cellcolor[rgb]{0.753,0.753,0.753}}91.44 \\
			& 16\,384 & 2.06 & {\cellcolor[rgb]{0.937,0.937,0.937}}2155 & {\cellcolor[rgb]{0.753,0.753,0.753}}13.09 & \textbf{1.78} & {\cellcolor[rgb]{0.937,0.937,0.937}}2185 & {\cellcolor[rgb]{0.753,0.753,0.753}}18.40 & 7.21 & {\cellcolor[rgb]{0.937,0.937,0.937}}262 & {\cellcolor[rgb]{0.753,0.753,0.753}}7.18 & \multicolumn{1}{c}{-} & \multicolumn{1}{c}{{\cellcolor[rgb]{0.937,0.937,0.937}}-} & \multicolumn{1}{c}{{\cellcolor[rgb]{0.753,0.753,0.753}}-} & \multicolumn{1}{c}{-} & \multicolumn{1}{c}{{\cellcolor[rgb]{0.937,0.937,0.937}}-} & \multicolumn{1}{c}{{\cellcolor[rgb]{0.753,0.753,0.753}}-} & \multicolumn{1}{c}{-} & \multicolumn{1}{c}{{\cellcolor[rgb]{0.937,0.937,0.937}}-} & \multicolumn{1}{c}{{\cellcolor[rgb]{0.753,0.753,0.753}}-} \\
			\multicolumn{20}{l}{} \\
			\multicolumn{1}{c}{\multirow{10}{*}{\rotatebox{90}{\textbf{webbase-2001}}}} & 4 & 1.06 & {\cellcolor[rgb]{0.937,0.937,0.937}}770 & {\cellcolor[rgb]{0.753,0.753,0.753}}1.33 & \textbf{1.05} & {\cellcolor[rgb]{0.937,0.937,0.937}}782 & {\cellcolor[rgb]{0.753,0.753,0.753}}2.20 & 1.44 & {\cellcolor[rgb]{0.937,0.937,0.937}}140 & {\cellcolor[rgb]{0.753,0.753,0.753}}3.65 & 1.10 & {\cellcolor[rgb]{0.937,0.937,0.937}}51 & {\cellcolor[rgb]{0.753,0.753,0.753}}6.32 & 1.34 & {\cellcolor[rgb]{0.937,0.937,0.937}}50 & {\cellcolor[rgb]{0.753,0.753,0.753}}6.32 & 2.17 & {\cellcolor[rgb]{0.937,0.937,0.937}}42 & {\cellcolor[rgb]{0.753,0.753,0.753}}2.59 \\
			& 32 & \textbf{1.08} & {\cellcolor[rgb]{0.937,0.937,0.937}}785 & {\cellcolor[rgb]{0.753,0.753,0.753}}1.31 & 1.09 & {\cellcolor[rgb]{0.937,0.937,0.937}}790 & {\cellcolor[rgb]{0.753,0.753,0.753}}2.20 & 1.96 & {\cellcolor[rgb]{0.937,0.937,0.937}}112 & {\cellcolor[rgb]{0.753,0.753,0.753}}3.65 & 1.16 & {\cellcolor[rgb]{0.937,0.937,0.937}}91 & {\cellcolor[rgb]{0.753,0.753,0.753}}6.32 & 1.61 & {\cellcolor[rgb]{0.937,0.937,0.937}}53 & {\cellcolor[rgb]{0.753,0.753,0.753}}6.32 & 4.49 & {\cellcolor[rgb]{0.937,0.937,0.937}}192 & {\cellcolor[rgb]{0.753,0.753,0.753}}2.59 \\
			& 128 & \textbf{1.09} & {\cellcolor[rgb]{0.937,0.937,0.937}}774 & {\cellcolor[rgb]{0.753,0.753,0.753}}1.31 & \textbf{1.09} & {\cellcolor[rgb]{0.937,0.937,0.937}}789 & {\cellcolor[rgb]{0.753,0.753,0.753}}2.25 & 2.13 & {\cellcolor[rgb]{0.937,0.937,0.937}}113 & {\cellcolor[rgb]{0.753,0.753,0.753}}3.65 & 1.20 & {\cellcolor[rgb]{0.937,0.937,0.937}}245 & {\cellcolor[rgb]{0.753,0.753,0.753}}7.18 & 1.68 & {\cellcolor[rgb]{0.937,0.937,0.937}}57 & {\cellcolor[rgb]{0.753,0.753,0.753}}7.18 & 5.96 & {\cellcolor[rgb]{0.937,0.937,0.937}}610 & {\cellcolor[rgb]{0.753,0.753,0.753}}4.31 \\
			& 256 & \textbf{1.09} & {\cellcolor[rgb]{0.937,0.937,0.937}}782 & {\cellcolor[rgb]{0.753,0.753,0.753}}1.38 & \textbf{1.09} & {\cellcolor[rgb]{0.937,0.937,0.937}}799 & {\cellcolor[rgb]{0.753,0.753,0.753}}2.21 & 2.18 & {\cellcolor[rgb]{0.937,0.937,0.937}}112 & {\cellcolor[rgb]{0.753,0.753,0.753}}3.65 & 1.22 & {\cellcolor[rgb]{0.937,0.937,0.937}}439 & {\cellcolor[rgb]{0.753,0.753,0.753}}9.04 & 1.68 & {\cellcolor[rgb]{0.937,0.937,0.937}}61 & {\cellcolor[rgb]{0.753,0.753,0.753}}9.04 & 6.66 & {\cellcolor[rgb]{0.937,0.937,0.937}}1113 & {\cellcolor[rgb]{0.753,0.753,0.753}}7.75 \\
			& 512 & \textbf{1.09} & {\cellcolor[rgb]{0.937,0.937,0.937}}790 & {\cellcolor[rgb]{0.753,0.753,0.753}}1.34 & \textbf{1.09} & {\cellcolor[rgb]{0.937,0.937,0.937}}797 & {\cellcolor[rgb]{0.753,0.753,0.753}}2.19 & 2.21 & {\cellcolor[rgb]{0.937,0.937,0.937}}111 & {\cellcolor[rgb]{0.753,0.753,0.753}}3.65 & 1.24 & {\cellcolor[rgb]{0.937,0.937,0.937}}817 & {\cellcolor[rgb]{0.753,0.753,0.753}}15.93 & 1.66 & {\cellcolor[rgb]{0.937,0.937,0.937}}67 & {\cellcolor[rgb]{0.753,0.753,0.753}}15.93 & 7.33 & {\cellcolor[rgb]{0.937,0.937,0.937}}2117 & {\cellcolor[rgb]{0.753,0.753,0.753}}14.64 \\
			& 1\,024 & 1.10 & {\cellcolor[rgb]{0.937,0.937,0.937}}751 & {\cellcolor[rgb]{0.753,0.753,0.753}}1.41 & \textbf{1.09} & {\cellcolor[rgb]{0.937,0.937,0.937}}798 & {\cellcolor[rgb]{0.753,0.753,0.753}}2.23 & 2.24 & {\cellcolor[rgb]{0.937,0.937,0.937}}109 & {\cellcolor[rgb]{0.753,0.753,0.753}}3.65 & 1.26 & {\cellcolor[rgb]{0.937,0.937,0.937}}1594 & {\cellcolor[rgb]{0.753,0.753,0.753}}29.71 & 1.66 & {\cellcolor[rgb]{0.937,0.937,0.937}}79 & {\cellcolor[rgb]{0.753,0.753,0.753}}29.71 & 7.97 & {\cellcolor[rgb]{0.937,0.937,0.937}}4126 & {\cellcolor[rgb]{0.753,0.753,0.753}}28.42 \\
			& 2\,048 & 1.11 & {\cellcolor[rgb]{0.937,0.937,0.937}}760 & {\cellcolor[rgb]{0.753,0.753,0.753}}1.34 & \textbf{1.10} & {\cellcolor[rgb]{0.937,0.937,0.937}}798 & {\cellcolor[rgb]{0.753,0.753,0.753}}2.20 & 2.26 & {\cellcolor[rgb]{0.937,0.937,0.937}}108 & {\cellcolor[rgb]{0.753,0.753,0.753}}3.65 & 1.27 & {\cellcolor[rgb]{0.937,0.937,0.937}}3187 & {\cellcolor[rgb]{0.753,0.753,0.753}}57.26 & 1.69 & {\cellcolor[rgb]{0.937,0.937,0.937}}108 & {\cellcolor[rgb]{0.753,0.753,0.753}}57.26 & 8.54 & {\cellcolor[rgb]{0.937,0.937,0.937}}8123 & {\cellcolor[rgb]{0.753,0.753,0.753}}55.97 \\
			& 4\,096 & 1.13 & {\cellcolor[rgb]{0.937,0.937,0.937}}763 & {\cellcolor[rgb]{0.753,0.753,0.753}}1.46 & \textbf{1.11} & {\cellcolor[rgb]{0.937,0.937,0.937}}798 & {\cellcolor[rgb]{0.753,0.753,0.753}}2.27 & 2.27 & {\cellcolor[rgb]{0.937,0.937,0.937}}107 & {\cellcolor[rgb]{0.753,0.753,0.753}}3.65 & \multicolumn{1}{c}{-} & \multicolumn{1}{c}{{\cellcolor[rgb]{0.937,0.937,0.937}}-} & \multicolumn{1}{c}{{\cellcolor[rgb]{0.753,0.753,0.753}}-} & \multicolumn{1}{c}{-} & \multicolumn{1}{c}{{\cellcolor[rgb]{0.937,0.937,0.937}}-} & \multicolumn{1}{c}{{\cellcolor[rgb]{0.753,0.753,0.753}}-} & \multicolumn{1}{c}{-} & \multicolumn{1}{c}{{\cellcolor[rgb]{0.937,0.937,0.937}}-} & \multicolumn{1}{c}{{\cellcolor[rgb]{0.753,0.753,0.753}}-} \\
			& 8\,192 & 1.15 & {\cellcolor[rgb]{0.937,0.937,0.937}}776 & {\cellcolor[rgb]{0.753,0.753,0.753}}1.33 & \textbf{1.12} & {\cellcolor[rgb]{0.937,0.937,0.937}}810 & {\cellcolor[rgb]{0.753,0.753,0.753}}2.16 & 2.28 & {\cellcolor[rgb]{0.937,0.937,0.937}}107 & {\cellcolor[rgb]{0.753,0.753,0.753}}3.66 & \multicolumn{1}{c}{-} & \multicolumn{1}{c}{{\cellcolor[rgb]{0.937,0.937,0.937}}-} & \multicolumn{1}{c}{{\cellcolor[rgb]{0.753,0.753,0.753}}-} & \multicolumn{1}{c}{-} & \multicolumn{1}{c}{{\cellcolor[rgb]{0.937,0.937,0.937}}-} & \multicolumn{1}{c}{{\cellcolor[rgb]{0.753,0.753,0.753}}-} & \multicolumn{1}{c}{-} & \multicolumn{1}{c}{{\cellcolor[rgb]{0.937,0.937,0.937}}-} & \multicolumn{1}{c}{{\cellcolor[rgb]{0.753,0.753,0.753}}-} \\
			& 16\,384 & 1.17 & {\cellcolor[rgb]{0.937,0.937,0.937}}797 & {\cellcolor[rgb]{0.753,0.753,0.753}}1.46 & \textbf{1.14} & {\cellcolor[rgb]{0.937,0.937,0.937}}818 & {\cellcolor[rgb]{0.753,0.753,0.753}}2.16 & 2.30 & {\cellcolor[rgb]{0.937,0.937,0.937}}109 & {\cellcolor[rgb]{0.753,0.753,0.753}}3.66 & \multicolumn{1}{c}{-} & \multicolumn{1}{c}{{\cellcolor[rgb]{0.937,0.937,0.937}}-} & \multicolumn{1}{c}{{\cellcolor[rgb]{0.753,0.753,0.753}}-} & \multicolumn{1}{c}{-} & \multicolumn{1}{c}{{\cellcolor[rgb]{0.937,0.937,0.937}}-} & \multicolumn{1}{c}{{\cellcolor[rgb]{0.753,0.753,0.753}}-} & \multicolumn{1}{c}{-} & \multicolumn{1}{c}{{\cellcolor[rgb]{0.937,0.937,0.937}}-} & \multicolumn{1}{c}{{\cellcolor[rgb]{0.753,0.753,0.753}}-} \\
			\multicolumn{20}{l}{} \\
			\multicolumn{1}{c}{\multirow{5}{*}{\rotatebox{90}{\textbf{gsh-2015}}}} & 4 & {1.26} & {\cellcolor[rgb]{0.937,0.937,0.937}}22797 & {\cellcolor[rgb]{0.753,0.753,0.753}}45.57 & {\textbf{1.25}} & {\cellcolor[rgb]{0.937,0.937,0.937}}25128 & {{\cellcolor[rgb]{0.753,0.753,0.753}}56.42} & {1.60} & {{\cellcolor[rgb]{0.937,0.937,0.937}}3178} & {{\cellcolor[rgb]{0.753,0.753,0.753}}100.49} & {1.30} & {{\cellcolor[rgb]{0.937,0.937,0.937}}2151} & {{\cellcolor[rgb]{0.753,0.753,0.753}}52.83} & {1.70} & {{\cellcolor[rgb]{0.937,0.937,0.937}}2149} & {{\cellcolor[rgb]{0.753,0.753,0.753}}52.83} & {3.59} & {{\cellcolor[rgb]{0.937,0.937,0.937}}1881} & {{\cellcolor[rgb]{0.753,0.753,0.753}}22.10} \\
			& 32 & {1.33} & {\cellcolor[rgb]{0.937,0.937,0.937}}22916 & {\cellcolor[rgb]{0.753,0.753,0.753}}45.57 & {\textbf{1.32}} & {{\cellcolor[rgb]{0.937,0.937,0.937}}25522} & {{\cellcolor[rgb]{0.753,0.753,0.753}}59.00} & {2.58} & {{\cellcolor[rgb]{0.937,0.937,0.937}}3468} & {{\cellcolor[rgb]{0.753,0.753,0.753}}100.35} & {1.51} & {{\cellcolor[rgb]{0.937,0.937,0.937}}4194} & {{\cellcolor[rgb]{0.753,0.753,0.753}}52.83} & {2.74} & {{\cellcolor[rgb]{0.937,0.937,0.937}}2195} & {{\cellcolor[rgb]{0.753,0.753,0.753}}52.83} & {16.94} & {{\cellcolor[rgb]{0.937,0.937,0.937}}10216} & {{\cellcolor[rgb]{0.753,0.753,0.753}}22.10} \\
			& 128 & {1.39} & {\cellcolor[rgb]{0.937,0.937,0.937}}23275 & {\cellcolor[rgb]{0.753,0.753,0.753}}46.24 & {\textbf{1.39}} & {{\cellcolor[rgb]{0.937,0.937,0.937}}25421} & {{\cellcolor[rgb]{0.753,0.753,0.753}}56.82} & {3.09} & {{\cellcolor[rgb]{0.937,0.937,0.937}}2268} & {{\cellcolor[rgb]{0.753,0.753,0.753}}100.49} & {1.65} & {{\cellcolor[rgb]{0.937,0.937,0.937}}11623} & {{\cellcolor[rgb]{0.753,0.753,0.753}}60.19} & {3.52} & {{\cellcolor[rgb]{0.937,0.937,0.937}}2393} & {{\cellcolor[rgb]{0.753,0.753,0.753}}60.19} & {26.99} & {{\cellcolor[rgb]{0.937,0.937,0.937}}38232} & {{\cellcolor[rgb]{0.753,0.753,0.753}}36.83} \\
			& 256 & {1.42} & {\cellcolor[rgb]{0.937,0.937,0.937}}23194 & {\cellcolor[rgb]{0.753,0.753,0.753}}45.57 & {\textbf{1.41}} & {{\cellcolor[rgb]{0.937,0.937,0.937}}25403} & {{\cellcolor[rgb]{0.753,0.753,0.753}}59.94} & {3.28} & {{\cellcolor[rgb]{0.937,0.937,0.937}}2253} & {{\cellcolor[rgb]{0.753,0.753,0.753}}100.49} & {1.70} & {{\cellcolor[rgb]{0.937,0.937,0.937}}21482} & {{\cellcolor[rgb]{0.753,0.753,0.753}}77.33} & {3.87} & {{\cellcolor[rgb]{0.937,0.937,0.937}}2545} & {{\cellcolor[rgb]{0.753,0.753,0.753}}77.33} & {31.09} & {{\cellcolor[rgb]{0.937,0.937,0.937}}64247} & {{\cellcolor[rgb]{0.753,0.753,0.753}}66.29} \\
			& 512 & {1.46} & {\cellcolor[rgb]{0.937,0.937,0.937}}23299 & {\cellcolor[rgb]{0.753,0.753,0.753}}45.57 & {\textbf{1.43}} & {{\cellcolor[rgb]{0.937,0.937,0.937}}25495} & {\cellcolor[rgb]{0.753,0.753,0.753}}59.32 & {3.44} & {{\cellcolor[rgb]{0.937,0.937,0.937}}2264} & {{\cellcolor[rgb]{0.753,0.753,0.753}}100.49} & {1.77} & {{\cellcolor[rgb]{0.937,0.937,0.937}}41227} & {{\cellcolor[rgb]{0.753,0.753,0.753}}136.25} & {4.21} & {{\cellcolor[rgb]{0.937,0.937,0.937}}2751} & {{\cellcolor[rgb]{0.753,0.753,0.753}}136.25} & \multicolumn{1}{c}{-} & \multicolumn{1}{c}{{\cellcolor[rgb]{0.937,0.937,0.937}}-} & \multicolumn{1}{c}{{\cellcolor[rgb]{0.753,0.753,0.753}}-}
		\end{tabular}
	}
\end{table}

\pagebreak
\section{Conclusion}
\label{sec:conclusion}
In this work, we propose \texttt{HeiStreamE}, a buffered streaming edge partitioner that achieves state-of-the-art solution quality, and \texttt{FreightE}, a highly efficient streaming edge partitioner that uses streaming hypergraph partitioning to assign blocks to edges on the fly. \texttt{HeiStreamE} processes the input graph in batches, constructs a novel graph transformation on the per-batch graph model, extends it with global partitioning information, and partitions it with a multilevel scheme. 
Aside from hashing-based streaming partitioners, which have poor solution quality, \texttt{HeiStreamE} and \texttt{FreightE} are the only known streaming edge partitioners whose runtime and memory consumption are both linear and asymptotically independent of the number of blocks of partition $k$. Our experiments demonstrate that \texttt{HeiStreamE} and \texttt{FreightE} consistently outperform all existing (re)streaming edge partitioners with regard to vertex replication metrics and runtime respectively. Additionally, \texttt{HeiStreamE} uses less memory than high-quality (re)streaming edge partitioners at $k \geq 256$ or when the graph has far more edges than vertices, as is the case for most real-world networks. Our findings underscore \texttt{HeiStreamE} as a highly memory-efficient and effective solution for streaming edge partitioning of large-scale graphs. 

\bibliography{phdthesiscs}

\begin{appendix}
\vfill \pagebreak
\section{Instance Properties}
\begin{table}[h!]
	\centering
	\caption{Graphs used for experiments. The roadNet graphs, wiki graphs, web-Google, web-NotreDame, and all social, co-purchasing, and autonomous systems graphs were obtained from the publicly available SNAP dataset~\cite{snap}. We also used graphs from the 10th DIMACS Implementation Challenge, namely eu-2005, in-2004 and uk-2007-05~\cite{benchmarksfornetworksanalysis}. Any remaining graphs are available on the network repository website~\cite{nr-aaai15} or on the Laboratory for Web Algorithmics website~\cite{BMSB,BRSLLP,BoVWFI}. For our experiments, we converted these graphs to a vertex-stream format (METIS) while removing parallel edges, self-loops, and directions, and assigning unitary weight to all vertices and edges. If a graph was unavailable in an equivalent edge list format, we converted the METIS format into the edge list format while preserving edge order for fairness during streaming.}
	\includegraphics[width=1.0\textwidth]{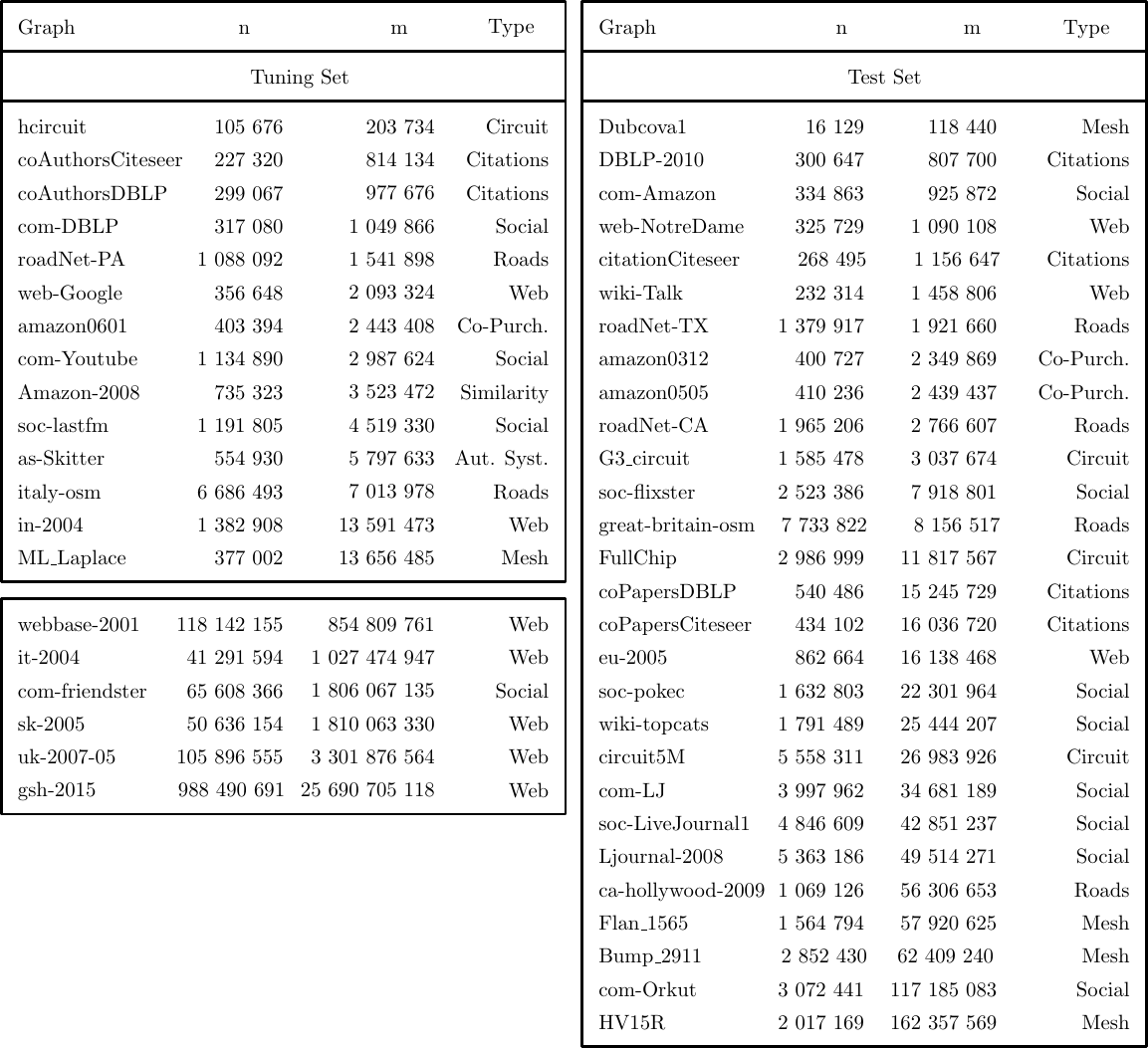}
	\label{table:graph}
\end{table}
\vfill

\end{appendix}
\end{document}